\providecommand{\DontPrintSemicolon}{\dontprintsemicolon}
\newcommand{\myheader}[1]{\bigskip\noindent\textbf{#1.}~~}
\theoremstyle{plain}
\newtheorem{theorem}{Theorem}[section]
\newtheorem{lemma}[theorem]{Lemma}
\newtheorem{proposition}[theorem]{Proposition}
\newtheorem{claim}[theorem]{Claim}
\newtheorem{definition}[theorem]{Definition}
\newtheorem{corollary}[theorem]{Corollary}
\theoremstyle{remark}
\newtheorem{remark}[theorem]{Remark}
\renewcommand{\vec}[1]{\mathbf{#1}}
\newcommand*{\algotitle}[2]{%
	\stepcounter{algocf}%
	\hypertarget{algocf.title.\theHalgocf}{}%
	\NR@gettitle{#1}%
	\label{#2}%
	\addtocounter{algocf}{-1}%
}
\DeclareMathOperator*{\argmin}{arg\,min}
\newcommand{\M}{\text{IACSM}}
\newcommand{\A}{\ensuremath{A}}
\newcommand{\T}{\ensuremath{T}}
\newcommand{\mtag}{\ensuremath{\textsf{tag}}}
\newcommand{\set}[1]{\ensuremath{\{#1 \}}}
\newcommand{\sset}[2]{\ensuremath{\{#1 \, : \, #2 \}}}
\newcommand{\vneg}{\vspace*{-\medskipamount}}
\newcommand{\dv}{\ensuremath{v}}
\title{\bfseries Cost Sharing over Combinatorial Domains: \\ Complement-Free Cost Functions and Beyond\thanks{A conference version appears in the proceedings of the 27th Annual European Symposium on Algorithms, ESA 2019.}}
\author{
	Georgios Birmpas$^1$\hspace{10pt} 
	Evangelos Markakis$^2$\hspace{10pt} 
	Guido Sch\"afer$^3$  \vspace{5pt}
	\\ 
        $^{1}$ \normalsize{Department of Computer Science, University of Oxford, UK}\\
        $^{2}$ \normalsize{Department of Informatics, Athens University of Economics and Business, Greece}\\
	$^3$ \normalsize{Networks and Optimization Group, Centrum Wiskunde \& Informatica (CWI), the Netherlands} \vspace{2pt} \\
	\normalsize{\texttt{gebirbas@gmail.com\hspace{10pt}
	markakis@aueb.gr\hspace{10pt}
	g.schaefer@cwi.nl}}
}
\begin{document}
\thispagestyle{empty}
\newpage

\maketitle

\sloppy
\begin{abstract}
We study mechanism design for combinatorial cost sharing models. Imagine that multiple items or services are available to be shared among a set of interested agents. The outcome of a mechanism in this setting consists of an assignment, determining for each item the set of players who are granted service, together with respective payments.  Although there are several works studying specialized versions of such problems, there has been almost no progress for general combinatorial cost sharing domains until recently \cite{DobzinskiO17}.  Still, many questions about the interplay between strategyproofness, cost recovery and economic efficiency remain unanswered.

The main goal of our work is to further understand this interplay in terms of budget balance and social cost approximation. 
Towards this, we provide a refinement of cross-monotonicity (which we term \emph{trace-monotonicity}) that is applicable to iterative mechanisms. The trace here refers to the order in which players become finalized. On top of this, we also provide two parameterizations (complementary to a certain extent) of cost functions which capture the behavior of their average cost-shares.

Based on our trace-monotonicity property, we design a scheme of ascending cost sharing mechanisms which is applicable to the combinatorial cost sharing setting with symmetric submodular valuations. 
Using our first cost function parameterization, we identify conditions under which our mechanism is weakly group-strategyproof, $O(1)$-budget-balanced and $O(H_n)$-approximate with respect to the social cost. Further, we show that our mechanism is budget-balanced and $H_n$-approximate if both the valuations and the cost functions are symmetric submodular; given existing impossibility results, this is best possible. 

Finally, we consider general valuation functions and exploit our second parameterization to derive a more fine-grained analysis of the Sequential Mechanism introduced by Moulin. This mechanism is budget balanced by construction, but in general only guarantees a poor social cost approximation of $n$. We identify conditions under which the mechanism achieves improved social cost approximation guarantees. In particular, we derive improved mechanisms for fundamental cost sharing problems, including Vertex Cover and Set Cover. 
\end{abstract}

\fussy

\pagestyle{plain}

\section{Introduction}
\label{sec:intro}



How to share the cost of a common service (or public good) among a set of interested agents constitutes a fundamental problem in mechanism design that has been studied intensively for at least two decades. Several deep and significant advancements have been achieved throughout this period, notably also combining classical mechanism design objectives (such as incentive compatibility, economic efficiency, etc.) with theoretical computer science objectives (such as approximability and computational efficiency). 

However, in the vast majority of the cost sharing models that have been proposed and analyzed in the literature, it is assumed that the mechanism designer is offering a single service and that each agent has a private value describing the willingness to pay for the service. At the same time, there is also a publicly known cost function which describes the total cost for offering the service to each possible subset of agents. Said differently, this results in a single-parameter mechanism design problem, where the goal is to select a subset of the players that will be granted service, subject to covering the cost and achieving an economically efficient outcome.


Although significant progress has been made for such single-parameter domains, moving towards more general \emph{combinatorial domains} has been almost elusive so far. Imagine that there are multiple goods to be shared among the agents who now have more complex valuation functions, expressing their willingness to pay for different subsets (or bundles) of goods. The cost function now depends on the subsets of agents sharing each of the items. An outcome of a mechanism under this setting, consists of an allocation, which specifies for each agent the goods for which she is granted service, together with a payment scheme.

The desired properties in designing a cost-sharing mechanism (be it combinatorial or not) are three-fold: 
(i) \emph{group-strategyproofness}: we would like resistance to misreporting preferences by individual agents or coalitions, 
(ii) \emph{budget-balance}: the payments of the players should cover the incurred cost, 
(iii) \emph{economic efficiency}: the allocation should maximize a measure of social efficiency. 
The fundamental results in \cite{GKL76,Roberts79} rule out the possibility that all three properties can be achieved. As a result, if we insist on any variant of strategyproofness, we are forced to settle with approximate notions of at least one of the other two criteria. In this context, approximate budget balance means that the mechanism may overcharge the agents, but not by too much. In terms of efficiency, considering a social cost objective instead of the classical social welfare objective (definitions are given in Section \ref{sec:defs}) seems more amenable for multiplicative approximation guarantees.

These adapted objectives have been investigated thoroughly for single-parameter problems, especially for cost-sharing variants of well-known optimization problems. 
In the context of more general combinatorial cost-sharing mechanisms, a restricted model with multiple levels of service was first studied in \cite{MehtaRS09}. Ever since, for almost a decade, there was no additional progress along these lines. It was only recently that a step forward was made by Dobzinski and Ovadia \cite{DobzinskiO17}. In their work, they introduce a combinatorial cost-sharing model and derive the first mechanisms guaranteeing good budget balance and social cost approximation guarantees for different classes of valuation and cost functions. As already pointed out in \cite{DobzinskiO17}, however, several important questions concerning our understanding of the approximability of these objectives remain open and deserve further study. This constitutes the starting point of our investigations reported in this work. 

\myheader{Our Contributions}
We make further advancements on the design and analysis of mechanisms for combinatorial cost-sharing models. 
To begin with, the analysis of the mechanisms we study asks for new conceptual ideas (which might be interesting on their own): 
\begin{itemize}
	\item We first provide a refinement of the well-known notion of \emph{cross-monotonic} cost sharing functions, which is key in the intensively studied class of Moulin-Shenker mechanisms \cite{MS01} for the single-parameter domain. We introduce the notion of \emph{trace-monotonic} cost sharing functions which is applicable for mechanisms that proceed iteratively and evict agents one-by-one. Trace-monotonicity formalizes the fact that the cost-shares observed by a player for an item do not decrease throughout the \emph{course} of the mechanism. That is, these cost shares may depend on the specific order (or \emph{trace} as we will call it) in which the mechanism considers the agents. 
	\item We identify two different and (to some extent) complementary parameterizations of the cost functions. Intuitively, these parameters measure the ``variance'' of the average cost-share $c(S)/|S|$, over all agent sets $S$. We introduce two such notions, which we term \emph{$\alpha$-average decreasing} and \emph{$\alpha$-average min-bounded} (see Definition \ref{def:avg-dec} and Definition \ref{def:cmin}, respectively). We note that for every cost function, there exist respective values of $\alpha$ (possibly different for each definition) for which these properties are satisfied. These definitions provide an alternative way to classify cost functions and their respective approximation guarantees in terms of budget balance and social cost. 
\end{itemize}

Using the above ideas, in Section \ref{sec:mech}, we derive a scheme for ascending cost sharing mechanisms, which can be seen as a (non-trivial) adaptation of the Moulin-Shenker mechanisms from the binary accept/reject setting to combinatorial cost sharing. Our notion of trace-monotonic cost shares plays a crucial role here. We show that our proposed mechanism is applicable for any non-decreasing cost function and for symmetric submodular valuations (i.e., submodular functions whose value depends only on the cardinality of the set). 

By exploiting the first parameterization of $\alpha$-average decreasing cost functions, our main result of Section \ref{sec:mech} is that for $\alpha = O(1)$, our mechanism is polynomial-time, weakly group-strategyproof, $O(1)$-budget-balanced and $O(H_n)$-approximate with respect to social cost, where $n$ is the number of agents \footnote{We  use $H_n$ to denote the \emph{$n$-th Harmonic number} defined as $H_n = 1 + \frac{1}{2} + \dots + \frac{1}{n}$.}. As a consequence, if both the valuation and cost functions are symmetric submodular ($\alpha=1$), the mechanism is budget-balanced and $H_n$-approximate. This is best possible even for a single item, as there exist corresponding inapproximability results by Dobzinski et al.~\cite{DobzinskiMRS18}. 
Prior to our work, the best known mechanism for symmetric submodular valuation and cost functions is $H_n$-budget balanced and $H_n$-approximate \cite{DobzinskiO17}. We anticipate that further extensions and generalizations might be feasible through our framework and this type of ascending mechanisms.

In Section \ref{sec:FPM}, we exploit our second parameterization of $\alpha$-average min-bounded cost functions, and provide results for general valuation functions. As it turns out, our parameterization enables us to obtain a more fine-grained analysis of the Sequential Mechanism introduced by Moulin \cite{Moulin99}. 
This mechanism is budget-balanced by construction, but in general only guarantees a poor social cost approximation of factor $n$. 
We show that for $\alpha$-average min-bounded cost functions with $\alpha = O(1)$, the Sequential Mechanism is budget balanced and $H_n$-approximate with respect to social cost. 
Interestingly, this result does not even require monotonicity of the valuation functions. In addition, we can push our results even a bit further by introducing a refinement of this class of cost functions (see Definition~\ref{def:cmax}) for which we show that the Sequential Mechanism is $O(1)$-approximate. 
The refinement allows us to obtain improved mechanisms for several cost functions originating from combinatorial optimization problems. For example, our result implies that the Sequential Mechanism is $d$-approximate for certain cost-sharing variants of Vertex Cover and Set Cover, where $d$ is the maximum degree of a node or the maximum size of a set, respectively; this improves upon existing results, even in the well-studied single-item case, when $d$ is constant. 

In general, the two parameterizations of the cost functions introduced in this work seem to be suitable means to accurately capture the approximation guarantees of both the ascending cost-sharing mechanism of Section \ref{sec:mech}, and the Sequential Mechanism of Section \ref{sec:FPM}. In fact, we have not managed to construct natural examples of cost functions which do not admit an $O(H_n)$-approximation by neither of the mechanisms studied here. See also the discussion in Section \ref{sec:dis}, where some examples are constructed but they are rather artificial (Proposition \ref{prop:intersec}). 
As such, these parameterizations help us to narrow down the class of cost functions which are not yet known to admit a good social cost approximation and enhance our understanding towards further progress in combinatorial cost sharing. 

\myheader{Related Work} For the single-item setting and with submodular cost functions, the best known group-strategyproof and budget balanced cost-sharing mechanism is arguably the Shapley value mechanism, introduced by Moulin and Shenker \cite{Moulin99,MS01}. 
This was also the first work that tried to quantify the efficiency loss of budget balanced cost-sharing mechanisms. 
Later, Feigenbaum et al.~\cite{FeigenbaumKSS03} showed that if one insists on truthfulness, there is no mechanism that achieves a finite approximation of the social welfare objective, even if one relaxes the budget balance property to cost recovery. 
To overcome this impossibility result, Roughgarden et al.~\cite{RoughgardenS09} introduced the notion of \emph{social cost} as an alternative means to quantify the efficiency of a mechanism. In the same work, they showed that the Shapley value mechanism is $H_n$-approximate with respect to this objective. 
Dobzinski et al.~\cite{DobzinskiMRS18} established another impossibility result for the social cost objective, and showed that every mechanism satisfying truthfulness and cost recovery cannot achieve a social cost approximation guarantee better than $\Omega(\log{n})$. 
The problem of deriving mechanisms with the best possible budget balance and social cost approximation guarantees for different cost functions arising from combinatorial optimization problems has been extensively studied in various works, see e.g., \cite{BleischwitzS08,BrennerS07,ChawlaRS06,GuptaKLRS15}.

Moving beyond the single-item case, Mehta et al.~\cite{MehtaRS09} introduced a new family of truthful mechanisms (called \emph{acyclic mechanisms}) which apply to general demand settings of multiple identical items when players have symmetric submodular valuations. For additional works that consider the general demand setting, the reader is referred to~ \cite{BleischwitzS08,BrennerS07,DevanurMV05,Moulin99}.  
Birmpas et al.~\cite{BirmpasC0M15} also studied families of valuation and cost functions for the multiple item setting, under cost sharing models that are motivated by applications in participatory sensing environments. 

Most related to our work is the recent work by Dobzinski and Ovadia~\cite{DobzinskiO17}. To the best of our knowledge, this is the only prior work that considers a more general approach for combinatorial cost sharing. 
They studied a multi-parameter setting and proposed a new VCG-based mechanism. Basically, the idea is to run a VCG mechanism~\cite{Clarke71,Groves73,Vickrey61} with respect to a modified objective function which is defined as the sum of the player valuations minus a potential. 
Intuitively, the latter ensures that the  payments computed by the mechanism cover the actual cost. 
They showed that this mechanism is strategyproof and $H_n$-approximate with respect to social cost. They also identified several classes of valuation and cost functions for which the mechanism is $H_n$-budget balanced. In particular, this is the case if the valuation and cost functions are symmetric.\footnote{We note that their definition of symmetry for the cost function differs from the one we use here.}
Additionally, they established that their mechanism is optimal with respect to the social cost approximation among all symmetric VCG-based mechanisms that always cover the cost.

\section{Definitions and Notation}
\label{sec:defs}

We assume there is a set $N = \{1, 2, \dots, n\}$ of players and a set $M=\{1,2,\dots,m\}$ of items. Each item can be viewed as a public good 
or some service that can be shared by the players. Each player $i$ has a \emph{private valuation function $v_i:2^{M}\to \mathbb{R}_{\geq 0}$} specifying the value that she derives from each subset of items.  

A \emph{cost-sharing mechanism} takes as input the declared (possibly false) valuation functions $\vec{b} = (b_i)_{i \in N}$ of the players and outputs (i) an allocation that determines which players share each item and (ii) a payment $p_i$ for each player $i$.
An allocation is denoted by a tuple $A = (A_1, \dots, A_n)$, where $A_i \subseteq M$ is the set of items provided to player $i$. For notational convenience, we also represent an allocation $A = (A_i)_{i \in N}$ as a tuple over the items space $(T_1, \dots, T_m)$ such that for every item $j \in M$, $T_j \subseteq N$ is the subset of players sharing item $j$, i.e., $T_j = \sset{i\in N}{j\in A_i}$.

In this paper, we consider mostly \emph{separable} cost functions. In the \emph{separable setting}, we assume that the overall cost of an allocation decomposes into the cost for providing each item separately. Hence, every item $j$ is associated to a known cost function $c_j:2^{N}\to \mathbb{R}_{\geq 0}$, which specifies for each set of players $T \subseteq N$, the cost $c_j(T)$ of providing item $j$ to the players in $T$. 
Thus, the total cost of an allocation $A$ is defined as 
\begin{equation}
\label{eq:cost-def}
C(A) = \sum_{j\in M} c_j(T_j)
\end{equation}
In Section \ref{subsec:nonsep}, we also consider the \emph{non-separable setting}, where we are given a more general cost function $C: (2^M)^n \rightarrow \mathbb{R}_{\ge 0}$, specifying for every allocation $A = (A_i)_{i \in N}$ the corresponding cost $C(A)$. Non-separable functions can capture dependencies among different items. 

We assume that the utility functions of the players are \emph{quasilinear}, i.e., given an allocation $A = (A_i)_{i \in N}$ and payments $(p_i)_{i \in N}$ determined by the mechanism for valuation functions $\vec{\dv} = (\dv_i)_{i \in N}$, the utility of player $i$ is defined as $u_i(\vec{v}) = v_i(A_i) - p_i$.
All our mechanisms have \emph{no positive transfers} (NPT), i.e., $p_i\geq 0$, and satisfy \emph{individual rationality} (IR), i.e., $p_i \le \dv_i(A_i)$.

In addition to the above, we are also interested in the following properties:
\begin{itemize}
\item \textbf{Weak Group-Strategyproofness (WGSP):}
	We insist on a stronger notion of resistance to manipulation than truthfulness: A mechanism is \emph{weakly group-strategyproof} if there is no deviation by a coalition of players that makes all its members strictly better off. More formally, we require that for every coalition $Q \subseteq N$ of players, every profile $\vec{v}_{-Q}$ of the other players, there is no deviation $\vec{b}_Q$ of the players in $Q$ such that $u_i(\vec{b}_Q, \vec{v}_{-Q}) > u_i(\vec{v}_Q, \vec{v}_{-Q})$ for every $i \in Q$, where $\vec{v}_Q$ is the profile of the actual valuation functions of $Q$.
	

\item \textbf{Budget Balance:}
	We are interested in mechanisms whose payments cover the allocation cost, ideally exactly. However, the latter is not always possible as it may be incompatible with the other objectives. We therefore consider an approximate budget balance notion: A mechanism is \emph{$\beta$-budget-balanced} ($\beta\geq 1$) if for every valuation profile $\vec{\dv} = (\dv_i)_{i \in N}$, the outcome $(A, p)$ computed by the mechanism satisfies
	\begin{equation}
	C(A) \leq \sum_{i\in N} p_i \leq \beta\cdot C(A).
	\end{equation}
	Clearly, we want $\beta$ to be as small as possible to not overcharge players too much for covering the cost. We say that the mechanism is \emph{budget balanced} if $\beta = 1$.
	
\item \textbf{Economic Efficiency:} 
	Our goal is to compute outcomes that are (approximately) efficient. To this aim, we use the \emph{social cost objective}, originally defined in \cite{RoughgardenS09}. Adapted to our combinatorial setting, the \emph{social cost} of an allocation $A = (A_i)_{i \in N}$ is defined as the actual cost of the outcome plus the value missed by not serving all items to all players, i.e.,
	\begin{equation}
	\pi(A) = \sum_{j\in M} c_j(T_j) + \sum_{i\in N}[ \dv_i(M)-\dv_i(S_i)].\footnote{Note that this adaptation was proposed in \cite{DobzinskiO17}. }
	\end{equation}
	A mechanism is said to be \emph{$\alpha$-approximate} with respect to the social cost objective if for every valuation profile $\vec{\dv} = (\dv_i)_{i \in N}$, the allocation $A$ output by the mechanism satisfies 
	\begin{equation}
	\pi(A) \leq \alpha \cdot \pi(A^*),
	\end{equation}
	where $A^*$ is an allocation of minimum social cost.
	\end{itemize}

We assume that both the valuation functions $(v_i)_{i \in N}$ and the cost functions $(c_j)_{j \in M}$ are non-decreasing (see below for formal definitions). 
Further, we focus on certain classes of valuation and cost functions: More specifically, we consider \emph{submodular} and \emph{subadditive} cost functions, both naturally modeling economies of scale. As to the valuation functions, we consider \emph{submodular} valuation functions in Section~\ref{sec:mech} and general valuation functions in Section~\ref{sec:FPM}. 
Further, the class of \emph{symmetric XOS} functions play a prominent role in Section~\ref{sec:mech}.\footnote{It is not hard to verify that these functions can equivalently be defined as stated in Definition~\ref{def:sym-XOS} (see also \cite{EzraFRS18}).}
Below we summarize all relevant definitions (see also Lehman et al. \cite{LehmanLN06}).

\begin{definition}\label{def:sym-XOS}
	Let $f: 2^U \rightarrow \mathbb{R}_{\ge 0}$ be a function defined over subsets of a universe $U$.
	\begin{compactenum}
		\item {$f$ is \emph{non-decreasing} if $f(S) \le f(T)$ for every $S \subseteq T \subseteq U$.}
		\item $f$ is \emph{symmetric} if $f(S)=f(T)$ for every $S, T \subseteq U$ with $|S| = |T|$. 
		\item $f$ is \emph{submodular} if $f(S \cup \{i\}) - f(S) \geq f(T\cup \{i\}) - f(T)$ for every $S\subseteq T \subseteq U$ and $i\not\in S$.
		\item {$f$ is \emph{XOS} if there are additive functions $f^1, \dots, f^k$ such that $f(S) = \max_{i \in [k]} f^i(S)$ for all $S \subseteq U$.}
		\item  $f$ is \emph{subadditive} if $f(S \cup T) \leq f(S) + f(T)$ for every $S, T \subseteq U$.
		
		\item $f$ is \emph{symmetric XOS} if it is symmetric and $f(S)/|S| \ge f(T)/|T|$ for every $S \subseteq T \subseteq U$.
	\end{compactenum}
\end{definition}

Some of our mechanisms make use of cross-monotonic cost-sharing functions defined as follows:
\begin{definition}
	Let $c: 2^N \rightarrow \mathbb{R}_{\ge 0}$ be a cost function. A \emph{cost-sharing function}\footnote{We stress here that it is also possible for cost-sharing methods to overcharge, something that leads to approximate budget balanced mechanisms.} $\chi: N \times 2^N \rightarrow \mathbb{R}_{\ge 0}$ with respect to $c$ specifies for each subset $S \subseteq N$ and every player $i \in S$ a non-negative cost share $\chi(i, S)$ such that $\sum_{i\in S} \chi(i, S) = c(S)$.\footnote{For notational convenience, we define $\chi(i, S) = \infty$ for $i \notin S$.}
	$\chi$ is \emph{cross-monotonic} if for all $S \subseteq T \subseteq N$ and every $i \in S$, we have $\chi(i, S) \ge \chi(i, T)$.
\end{definition}

\section{An Iterative Ascending Cost Sharing Mechanism}
\label{sec:mech}

In this section, we present our \emph{Iterative Ascending Cost Sharing Mechanism ($\M$)} for the combinatorial cost sharing setting with symmetric submodular valuations and general cost functions. We first provide a generic description of our mechanism and identify two properties which are sufficient for our main result to go through. We then show that these properties are satisfied if the valuations are symmetric submodular. 

\subsection{Definition of $\M$}
\label{subsec:def-iacsm}
\label{subsec:twoprop}

\begin{algorithm}[t]
	\small
	\DontPrintSemicolon 
	\caption{Iterative Ascending Cost Sharing Mechanism ($\M$)} \label{fig:alg-1} 
	
	\KwIn{Declared valuation functions $(b_i)_{i \in N}$.}
	\KwOut{Allocation $A = (A_i)_{i \in N}$ and payments $p = (p_i)_{i \in N}$.}
	
	\textbf{Initialization:} Let $X = N$ be the set of active players and define $T_j = N$ for every item $j \in M$. 
	\label{alg:s0}
	
	\While{$X \neq \emptyset$}{
		Compute an \emph{optimal bundle} $A_i$ for every player $i \in X$: 
		\vneg
		\label{alg:s1}
		\begin{align}\label{eq:opt-choice}
		A_i & \in \arg\max_{S \subseteq M} \{b_i(S) - p_i(S)\}, \quad \text{where} \quad p_i(S) = \sum_{j \in S} \chi_j(T_j) \\[-5ex] \notag
		\end{align}
		(If there are several optimal bundles, resolve ties as described within Section \ref{subsec:def-iacsm}.) \;

		Let $i^* \in X$ be a player such that $|A_{i^*}| \le |A_i|$ for every $i \in X$.\;
		\label{alg:s2}
		
		Assign the items in $A_{i^*}$ to player $i^*$ and remove player $i^*$ from $X$. \;
		\label{alg:s3}
		
		For every item $j \in M \setminus A_{i^*}$, set $T_j = T_j \setminus \set{i^*}$, and update the cost shares $\chi_j(T_j)$.\;
		\label{alg:s4}
		
	}
	\Return $A = (A_i)_{i \in N}$ and $p = (p_i)_{i \in N}$, where $p_i = \sum_{j \in A_i} \chi_j(T_j)$.
	\label{alg:s5}
\end{algorithm}

Mechanism $\M$ can be viewed as a generalization of the \emph{Moulin-Shenker} mechanism \cite{MS01} to the combinatorial setting in the sense that it simulates in parallel an ascending iterative auction for each item. To our knowledge this is the first ascending price mechanism for the combinatorial setting which is not VCG-based and as we will describe below, this adaptation is not straightforward since there are several obstacles we need to overcome. A description of our mechanism $\M$ is given in Algorithm~\ref{fig:alg-1}. 

The mechanism maintains a set of \emph{active} players $X$ and for each item $j \in M$ a set of players $T_j$ who are \emph{tentatively} assigned to $j$. Initially, each player is active and tentatively assigned to all the items, i.e., $X = N$ and $T_j = N$ for all $j \in M$. 
The mechanism then proceeds in iterations. 
In each iteration, each item $j$ is offered to each active player $i \in X$ at a price that only depends on the set of tentatively assigned players $T_j$. For this, we use a player-independent \emph{cost sharing function} $\chi_j(\cdot, T_j)$ for every item $j$, and since we require that $\chi_j(i, T_j) = \chi_j(k, T_j)$ for every $i, k\in T_j$, we will simply denote by $\chi_j(T_j)$ the cost share that each player $i \in T_j$ tentatively assigned to $j$ has to pay.  
Based on these cost shares, every active player $i \in X$ computes an \emph{optimal bundle} $A_i$ with respect to the payments $p_i(\cdot)$, as defined in Equation \eqref{eq:opt-choice}. If there are ties, we resolve them according to the following \emph{tie-breaking rule}: if there are several optimal bundles, then player $i$ chooses one of maximum size. 
If there are multiple optimal bundles of maximum size $k$, then she chooses the bundle consisting of the $k$ cheapest items (where ties between equal cost share items are resolved consistently, say by index of the items). 

After determining the optimal bundle for each active player, the mechanism then chooses an active player $i^*$ whose optimal bundle has minimum size. Again, we break ties consistently, say by index of the players. 
The items in $A_{i^*}$ are assigned to player $i^*$ and $i^*$ becomes inactive. Finally, for every item $j$ which is not part of the optimal bundle $A_{i^*}$, $i^*$ is removed from the tentative set $T_j$. The mechanism terminates when all players are inactive.

\subsection{Two Crucial Properties}
\label{subsec:twoprop}

In this section we identify two crucial properties that our mechanism has to satisfy for our main result to go through. To formalize these properties, we introduce first some more notation. 

\myheader{Trace of $\M$}
Note that the execution of our mechanism $\M$ on an instance of the problem induces an order $\tau = (\tau_1, \dots, \tau_n)$ on the players.
Without loss of generality, we may assume that the players are renamed such that $\tau = (1, \dots, n)$, i.e., player $i$ becomes inactive in iteration $i$; however, we emphasize that this order is determined by the run of our mechanism. 

The order $\tau = (1, \dots, n)$ together with the final bundle $A_i$ assigned to each player $i$ at the end of iteration $i$ induces an order of player withdrawals for each item $j$. More precisely, for every $j\in M$ we let $\tau_j$ be the subsequence of $\tau$ consisting only of the players who withdrew from item $j$ (at the end of the iteration when they became inactive). We refer to $\tau_j$ as the \emph{trace of item $j$}. Recall that initially $T_j = N$ and hence all players are tentatively assigned to $j$. The length of $\tau_j$ can vary from $0$, when nobody withdraws from item $j$ and $\tau_j$ is the null sequence, all the way to $n$, when everybody withdraws from $j$ and $\tau_j = \tau$.   
Given a trace $\tau_j$ in the form $\tau_j= ({i_1}, {i_2}, \dots, {i_\ell})$ and $k\in \{0, 1, \dots, |\tau_j|\}$, let $R_j^k = N\setminus \{{i_1}, {i_2}, \dots , {i_k}\}$; define $R_j^0 = N$. Note that the set $R_j^k$ is precisely the set of players tentatively assigned to $j$ after $k$ players have withdrawn from $j$ during the execution of the mechanism. 
We note that the notion of trace is valid also for any other iterative mechanism where the assignment of one player becomes finalized at each iteration, e.g., \cite{MehtaRS09}.

\myheader{Trace-monotonic cost sharing functions} 
We introduce a new property of cost sharing functions which will turn out to be crucial below. Intuitively, it is a refinement of the standard cross-monotonicity property which has to hold only for \emph{certain} subsets of players encountered by the mechanism, namely for the sets $\set{R^k_j}$. More precisely, 
given a trace $\tau_j$ for an item $j\in M$, we say that the cost sharing function $\chi_j$ is \emph{cross-monotonic with respect to $\tau_j$} (or, \emph{trace-monotonic} for short), if 
\[\forall k \in \set{0, \dots, |\tau_j|-1}: \qquad \chi_j(R^k_{j}) \le \chi_j(R^{k+1}_{j}).
\]
Note that this ensures that the cost share of item $j$ (weakly) increases during the execution of the mechanism, as we consider the sequence of sets 
\[
R_j^0 \supset R_j^1 \supset \dots \supset R_j^{|\tau_j|}.
\]
A subtle point here is that the definition of the cost share $\chi_j(R_j^k)$ may not only depend on the set of players $R^k_j$, but also on the trace $\tau_j$ specifying how the set $R^k_j$ has been reached by the mechanism.\footnote{Notationally, we would have to write here $\chi^{\tau_j}_j$ to indicate the dependency on $\tau_j$. However, in the analysis we focus on a fixed trace produced by an execution of the mechanism and omit the explicit reference to it for notational convenience.} 
It will become clear below that this additional flexibility enables us to implement our mechanism for \emph{arbitrary} cost functions.

\myheader{Properties (P1) and (P2)}
Our first property is rather intuitive: An item $j$ needs to be offered to all active players at the same price and this price can only increase in subsequent iterations. In particular, this ensures that if at the end of iteration $i$, player $i$ withdraws from an item $j \in M \setminus A_i$, then the price of $j$ for the remaining players in $T_j \setminus \set{i}$ does not decrease. This is crucial to achieve strategyproofness, and it is captured precisely by trace-monotonic cost sharing functions.\footnote{Note that we have to require trace-monotonicity with respect to an \emph{arbitrary} trace of item $j$ here, because we cannot control the trace $\tau_j$ that will be realized by $\M$. 
}
%
\begin{enumerate}[\textbf{(P1)}]
	\item  For each item $j \in M$ the cost sharing function $\chi_j$ is trace monotonic for every trace $\tau_j$.
\end{enumerate}

The first property alone is not sufficient to ensure that our mechanism $\M$ is weakly group-strategyproof (or even strategyproof). 
Additionally, we need to enforce the following refinement property on the final bundles assigned to the players. We prove below that Property (P2) is satisfied for symmetric submodular valuation functions.
\begin{enumerate}[\textbf{(P2)}]\setcounter{enumi}{1}
	\item The final bundles $(A_i)_{i \in N}$ assigned to the players satisfy the following \emph{refinement property}: $A_i \subseteq A_{i+1}$ for every $i \in \set{1, \dots, n-1}$.
\end{enumerate}

\subsection{Feasibility of (P1) and (P2)}

We next define the cost sharing function that we use. The intuition is as follows: Suppose that $S = T_j$ is the set of players who are tentatively allocated to item $j$ at the beginning of iteration $i$ for some $i\in [n]$. Ideally, we would like to charge the average cost $c_j(S)/|S|$ to each player in $S$, but we cannot simply do this because the average cost might decrease with respect to the previous iteration, and this will destroy Property (P1). Given our new notion of trace-monotonicity, we can resolve this by defining the cost share of item $j$ as the maximum average cost over all player sets which were tentatively allocated to $j$ so far. 

More formally, let $\tau_j$ be the trace of item $j$ induced by $\M$ when executed on a given instance. Let $S$ be the set of players tentatively assigned to item $j$ at the beginning of iteration $i$, and fix $k$ such that $R_j^k = S$ (by the definition of our mechanism, such a $k$ must exist and $k \le i-1$). 
We define 
\begin{equation}\label{eq:max-avg-csr}
\chi_j(S) = \max_{\ell \in \set{0, \dots, k}}  
\frac{c_j(R_j^\ell)}{|n-\ell|}.
\end{equation}

Note that by using this definition we may end up overcharging the actual cost $c_j(S)$ of item $j$ in the sense that $|S| \cdot \chi_j(S) > c_j(S)$. As we show in Section \ref{subsec:main}, the budget balance and social cost approximation guarantees depend on the magnitude by which we might overcharge.

It is now trivial to show that Property (P1) holds. 

\begin{lemma}\label{lem:average-cs}
	Consider some item $j \in M$ and let $c_j: 2^N \rightarrow \mathbb{R}_{\ge 0}$ be an arbitrary cost function. Let $\tau_j$ be an arbitrary trace of $j$.
	The cost sharing function $\chi_j$ defined in \eqref{eq:max-avg-csr} is trace-monotonic.
\end{lemma}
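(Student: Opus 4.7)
The plan is to observe that the trace-monotonicity inequality follows directly from the fact that $\chi_j(R_j^{k+1})$ is, by definition, a maximum taken over a strictly larger index set than the one defining $\chi_j(R_j^k)$. So the argument is essentially a one-liner unpacking the definition in \eqref{eq:max-avg-csr}; no property of the underlying cost function $c_j$ is used, which matches the fact that the lemma is stated for an \emph{arbitrary} non-decreasing cost function.

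Concretely, I would proceed as follows. Fix an item $j \in M$ and an arbitrary trace $\tau_j$, and fix any $k \in \{0, \dots, |\tau_j|-1\}$. By the definition of the sets $R_j^\ell$, we have
\[
\chi_j(R_j^k) \;=\; \max_{\ell \in \{0, \dots, k\}} \frac{c_j(R_j^\ell)}{n-\ell},
\qquad
\chi_j(R_j^{k+1}) \;=\; \max_{\ell \in \{0, \dots, k+1\}} \frac{c_j(R_j^\ell)}{n-\ell}.
\]
Since $\{0, \dots, k\} \subseteq \{0, \dots, k+1\}$, taking the maximum over the larger set can only yield a value at least as large, so $\chi_j(R_j^k) \le \chi_j(R_j^{k+1})$. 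This holds for every $k \in \{0, \dots, |\tau_j|-1\}$, establishing the trace-monotonicity property as required.

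There is no real obstacle here; the only subtlety worth flagging explicitly is that the definition in \eqref{eq:max-avg-csr} is legitimately a function of the trace (not only of the set $R_j^k$), which is precisely why the monotonicity argument goes through without any assumption on $c_j$ — a point worth stressing because it is the very feature that lets $\M$ be instantiated for arbitrary non-decreasing cost functions, as advertised in the paragraph preceding the lemma.
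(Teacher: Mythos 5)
Your argument is correct and is essentially identical to the paper's own proof: both simply observe that $\chi_j(R_j^{k+1})$ is a maximum over the superset $\{0,\dots,k+1\} \supseteq \{0,\dots,k\}$ of the index set defining $\chi_j(R_j^k)$, hence is at least as large. No further comment is needed.
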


\begin{proof}
By definition \eqref{eq:max-avg-csr}, we have for every $k \in \set{0, \dots, |\tau_j|-1}$ 
	$$
	\chi_j(R^k_j) 
	= \max_{\ell \in \set{0, \dots, k}} \frac{c_j(R_j^\ell)}{|n-\ell|} 
	\le \max_{\ell \in \set{0, \dots, k+1}} \frac{c_j(R^\ell_j)}{|n-\ell|} 
	= \chi_j(R_j^{k+1}).
	$$
\end{proof}

We turn to Property (P2). In general, it seems difficult to guarantee (P2), but it is not hard to see that it holds if the valuation functions are symmetric submodular. 

\begin{lemma}\label{lem:refine}
	Suppose the valuation functions are symmetric submodular. Then $A_i \subseteq A_{i+1}$ for every $i \in \set{1, \dots, n-1}$.
\end{lemma}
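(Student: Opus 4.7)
The plan is to exploit two structural features. First, each symmetric submodular valuation $v_\ell$ is determined by the bundle size, and as a function of the size $k$ it is concave (a standard consequence of submodularity plus symmetry); combined with the observation that the sum of the $k$ cheapest prices is a convex function of $k$ (its differences are the sorted prices in increasing order), this makes each active player's utility, viewed as a function of chosen bundle size, a concave function of $k$. Second, by trace-monotonicity (Property (P1)) and the update step of $\M$, between iterations $i$ and $i+1$ the cost shares of items in $A_i$ remain unchanged (player $i$ stays in $T_j$ for $j \in A_i$), while the cost shares of items outside $A_i$ can only weakly increase (player $i$ is removed from $T_j$ for $j \notin A_i$).

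Building on this, I would proceed as follows. (a) Translate each active player's optimization into first choosing an optimal bundle size and then, by the ``maximum size'' and ``cheapest items'' tie-breaking rules, selecting the cheapest items of that size in a consistent order. (b) Use that player $i$ is chosen at iteration $i$ precisely because her optimal bundle size is smallest among active players, so that player $i{+}1$'s optimal size at iteration $i$ is at least $|A_i|$; concavity of her utility-in-size function, together with the maximum-size tie-break, then yields that at iteration $i$ every size below $|A_i|$ is weakly dominated by size $|A_i|$. (c) Use the price-evolution observation to argue that for any $k \le |A_i|$ the $k$ cheapest items at iteration $i{+}1$ coincide with the $k$ cheapest items at iteration $i$; consequently player $i{+}1$'s utility-in-size function is unchanged below $|A_i|$, and the inequality from (b) carries over to iteration $i{+}1$, forcing $|A_{i+1}| \ge |A_i|$. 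Finally, since the $|A_i|$ cheapest items at iteration $i{+}1$ are exactly $A_i$ and $A_{i+1}$ consists of the $|A_{i+1}|$ cheapest items at iteration $i{+}1$, we conclude $A_i \subseteq A_{i+1}$.

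The main obstacle I anticipate is step (c): carefully verifying that the $k$ cheapest items at iteration $i{+}1$ are precisely those at iteration $i$ for every $k \le |A_i|$. Intuitively this holds because items in $A_i$ keep their prices, items outside $A_i$ only get more expensive, and the relative ordering within $A_i$ is preserved; but making this rigorous requires leaning explicitly on the consistent tie-breaking rule in the definition of $\M$. Once this set-equality is nailed down, the concavity argument from (a)--(b) finishes the proof cleanly, with no further case analysis needed.
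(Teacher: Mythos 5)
Your proof is correct and follows essentially the same route as the paper's: both reduce, via symmetry and the tie-breaking rule, to comparing prefixes of the cheapest-item order across iterations $i$ and $i+1$, using that the cost shares of items in $A_i$ are frozen (player $i$ stays in $T_j$) while those outside $A_i$ can only rise by trace-monotonicity. The one place you go beyond the paper is in making explicit the concavity of utility as a function of bundle size, which is indeed the reason the optimal size of player $i+1$ cannot drop below $|A_i|$ after the larger bundles become more expensive; the paper leaves this implicit in its appeal to ``similar arguments'' via the greedy bundle construction and the maximum-size tie-break.
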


\begin{proof}
Fix some $i \in \set{1, \dots, n-1}$ and consider players $i$ and $i+1$. 
	Note that both $i$ and $i+1$ are active at the beginning of iteration $i$. 
	Let $A_i$ and $A'_{i+1}$ be the optimal bundles chosen by $i$ and $i+1$ in iteration $i$, respectively. 
	Because $i$ is chosen, we have $|A_i| \le |A'_{i+1}|$. 
	Further, because the valuation functions are symmetric, $A_i$ consists of the $|A_i|$ smallest cost share items (by our tie-breaking rule). Similarly, $A'_{i+1}$ consists of the $|A'_{i+1}|$ smallest cost share items. We conclude that $A_i \subseteq A'_{i+1}$. (Note that here we exploit that if there are several optimal bundles for player $i+1$, then the one of maximum size is chosen.) 
	
	Note that at the end of iteration $i$, player $i$ becomes inactive and withdraws from the items in $A'_{i+1} \setminus A_i$. By trace-monotonicity (Property (P1)), the cost shares of these items do not decrease in iteration $i+1$. Also, the cost shares of all items in $A_i$ remain the same.
	
	Consider now player $i+1$ in iteration $i+1$. By using similar arguments, it follows that the optimal bundle $A_{i+1}$ consists of the $|A_{i+1}|$ lowest cost share items. 
	But note that in iteration $i$, the optimal bundle $A'_{i+1}$ of player $i+1$ contained all the items of $A_i$ and possibly a few more. Since the items of $A_i$ continue to be the ones of smallest cost share, the optimal bundle of $i+1$ in iteration $i+1$ must still contain all the items in $A_i$. (Note that here we again exploit that the optimal bundle of maximum size is chosen if there are ties.) Thus it must hold that $A_{i+1} \supseteq A_i$.
\end{proof}

\subsection{Main result for IACSM}
\label{subsec:main}

In order to state our main result of this section, we need to introduce a crucial parameter that determines the budget balance and social cost approximation guarantees of our mechanism. 

\begin{definition} \label{def:avg-dec}
	A cost function $c: 2^N \rightarrow \mathbb{R}_{\ge 0}$ is \emph{$\alpha$-average-decreasing} if there exists some $\alpha \ge 1$ such that for every $S \subseteq T \subseteq N$, $\alpha \cdot \frac{c(S)}{|S|} \ge \frac{c(T)}{|T|}$. 
\end{definition}

Note that for every cost function $c$ there exists some $\alpha \ge 1$ such that $c$ is $\alpha$-average decreasing. However, here we are particularly interested in $\alpha$-average decreasing cost functions for which the parameter $\alpha$ is small, as can be seen by Theorem \ref{thm:main} below. 
Average decreasing functions with small values of $\alpha$ arise naturally in the domains of digital goods and public goods. For digital goods the cost of serving a non-empty set of customers is typically assumed to be constant because there is a cost for producing the good and then it can be shared with no additional cost (hence the definition is satisfied with $\alpha=1$). The same is applicable for some public good models. Note also that \emph{symmetric XOS cost functions} (see Definition~\ref{def:sym-XOS}) are average-decreasing (i.e., $\alpha = 1$).

The following is the main result of this section.

\begin{theorem}\label{thm:main}
	Suppose the valuation functions are symmetric submodular and the cost functions are $\alpha$-average decreasing. Then the mechanism $\M$ runs in polynomial time, satisfies IR, NPT, WGSP and is $\alpha$-budget balanced and $2 \alpha^3 H_n$-approximate.
\end{theorem}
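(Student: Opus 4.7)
My plan is to handle the listed properties in increasing order of difficulty, building directly on Lemmas \ref{lem:average-cs} and \ref{lem:refine} which establish (P1) and (P2) for symmetric submodular valuations. Polynomial runtime is immediate: $\M$ has exactly $n$ iterations, and under symmetric submodular valuations the optimal bundle in line 2 of Algorithm \ref{fig:alg-1} reduces to sorting items by current cost share and checking the $m+1$ prefixes. IR follows from optimality of $A_i$ at iteration $i$ together with (P2): by Lemma \ref{lem:refine} no player finalized after $i$ withdraws from an item in $A_i$, so the tentative sets for items in $A_i$ freeze at iteration $i$ and $p_i$ equals the price against which $i$ chose her bundle. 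NPT is immediate from non-negativity of $\chi_j$.

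For WGSP, I would adapt the classical Moulin--Shenker argument, substituting trace-monotonicity for cross-monotonicity. Fix a coalition $Q$ with deviation $\vec{b}_Q$ and consider the two executions side by side. Let $i^\ast \in Q$ be the first member of $Q$ to become inactive in the deviating run. The plan is to show that up to the iteration of $i^\ast$'s finalization, every tentative set in the deviating run is a subset of its counterpart in the truthful run (so no cost share faced by $i^\ast$ is smaller), using (P1) together with the fact that players outside $Q$ behave identically as long as the cost-share sequences they are exposed to agree. The tie-breaking rule together with (P2) then implies that the bundle and payment $i^\ast$ ends up with in the deviating run are weakly worse than under truth-telling, contradicting the assumption that every member of $Q$ strictly benefits.

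Budget balance falls directly out of \eqref{eq:max-avg-csr}. For each item $j$ with final served set $T_j = R_j^{k^\ast}$, the $\ell = k^\ast$ term of the defining max gives $|T_j|\,\chi_j(T_j) \ge c_j(T_j)$, whence $\sum_i p_i \ge C(A)$. For the upper bound, $R_j^{k^\ast} \subseteq R_j^\ell$ for each $\ell \le k^\ast$, so the $\alpha$-average-decreasing hypothesis gives $c_j(R_j^\ell)/(n-\ell) \le \alpha\, c_j(T_j)/|T_j|$; taking the max over $\ell$ and summing over $j$ yields $\sum_i p_i \le \alpha\,C(A)$.

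The main obstacle is the $2\alpha^3 H_n$ social cost bound, which I would attack in the spirit of Roughgarden--Sundararajan summability. Because each $A_i$ maximizes $v_i(\cdot) - p_i(\cdot)$ at iteration $i$ (using truthful reports, as justified by WGSP),
\begin{equation*}
v_i(A_i) - p_i \;\ge\; v_i(A_i^\ast) - \sum_{j \in A_i^\ast} \chi_j(T_j^{(i)}),
\end{equation*}
where $T_j^{(i)}$ denotes the tentative set of $j$ at the moment $i$ is finalized. Summing over $i$ and using $\sum_i p_i \ge C(A)$ gives
\begin{equation*}
\pi(A) - \pi(A^\ast) \;\le\; \sum_{j \in M} \sum_{i \in T_j^\ast} \chi_j(T_j^{(i)}) \;-\; C(A^\ast).
\end{equation*}
The crux, and the step I expect to be most technically delicate, is to bound $\sum_{i \in T_j^\ast} \chi_j(T_j^{(i)})$ by $O(\alpha^3 H_n)\, c_j(T_j^\ast)$ for each item $j$: order the players of $T_j^\ast$ by the iteration at which they are finalized, invoke trace-monotonicity to control each $\chi_j(T_j^{(i)})$ by $\alpha$ times the current average cost on the prevailing tentative set, and telescope the resulting harmonic-like sum while applying $\alpha$-average-decreasing a second time to pass between these averages and $c_j(T_j^\ast)/|T_j^\ast|$. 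Aggregating one $\alpha$ from max-average to current average, a second from the $\alpha$-average-decreasing transfer to $T_j^\ast$, and a third from converting payments to costs via budget balance, together with a factor $2$ for handling the excess-cost and missed-value halves of $\pi$ separately, should yield the claimed $2\alpha^3 H_n$.
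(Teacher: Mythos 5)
Your handling of runtime, IR, NPT and budget balance coincides with the paper's, and your reduction of the social-cost bound to controlling $\sum_{i\in \T^*_j}\chi_j(\T^{i}_j)$ item by item is exactly the paper's first step. The first genuine gap is in WGSP. You anchor on the first coalition member $i^*$ finalized in the deviating run and assert that, up to that iteration, every tentative set in the deviating run is contained in its truthful counterpart, so that $i^*$ faces no smaller cost shares. That invariant is unjustified: once the two executions diverge, non-coalition players are no longer guaranteed to behave identically (their optimal bundles depend on the now-different cost shares), so the withdrawal patterns can differ in either direction. Worse, the framing misses the case the paper has to work hardest for (Case~2 of Lemma~\ref{lem:tr}): at the \emph{first} iteration $i$ at which the runs differ the cost shares are still identical in both runs, and a coalition member $k$ may be finalized there in place of a truthful player $i\notin Q$, receiving a bundle $\hat A_k\subseteq A_i\subseteq A_k$ at the \emph{same} per-item prices. ``Weakly worse bundle and payment'' does not by itself bound the utility; one needs that the items of $\hat A_k$ have frozen, identical prices in both runs (via (P2) and Lemma~\ref{lem:final}) and then the optimality of the truthful bundle $A_k$ together with additivity of $p_k$ to get $\dv_k(A_k)-p_k(A_k)\ge \dv_k(\hat A_k)-\hat p_k(\hat A_k)$. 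Your sketch contains neither ingredient.

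The second gap is in the charging argument. The bound you can extract from \eqref{eq:max-avg-csr} and average-decrease is $\chi_j(\T^i_j)\le \alpha\, c_j(\set{i,\dots,n})/(n-i+1)$ (Lemma~\ref{lem:acost-share-bound}), but the suffix set $\set{i,\dots,n}$ and the optimal set $\T^*_j$ are in general \emph{not nested}, so ``applying $\alpha$-average-decreasing a second time'' does not let you pass from one average to the other. The paper needs the separate comparison $c(S)\le 2\alpha\, c(T)$ for arbitrary sets with $|S|\le|T|$ (Lemma~\ref{lem:adiffer-2}, proved via $S\cup T$), applied to an intermediate set of size $n-i+1$ that is comparable to $\T^*_j$; this is where the factor $2$ in $2\alpha^3 H_n$ actually originates --- not from treating the two halves of $\pi$ separately, and not from budget balance (only the exact cost-recovery direction $\sum_i p_i\ge C(A)$ is used in that chain). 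Without such a cross-comparison between non-nested sets the harmonic telescoping you describe cannot be closed.
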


Symmetric submodular cost functions are average decreasing (i.e., $\alpha = 1$) since they are a subclass of symmetric XOS functions. As a consequence, we obtain the following corollary from Theorem~\ref{thm:main} (with an additional improvement on the social cost approximation). 

\begin{corollary}\label{col:adm}
	Suppose the valuation functions and the cost functions are symmetric submodular. Then the mechanism $\M$ runs in polynomial time, satisfies IR, NPT, WGSP and is budget balanced and $H_n$-approximate.
\end{corollary}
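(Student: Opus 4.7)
The plan is to first observe that symmetric submodular cost functions constitute a subclass of symmetric XOS functions: if $c_j$ is symmetric submodular with $c_j(\emptyset)=0$, then for every $S \subseteq T$ we have $c_j(S)/|S| \ge c_j(T)/|T|$ (diminishing marginals combined with symmetry imply that the average cost is non-increasing in $|S|$). Hence such cost functions are $1$-average decreasing in the sense of Definition~\ref{def:avg-dec}. Plugging $\alpha=1$ into Theorem~\ref{thm:main} immediately yields polynomial runtime, IR, NPT, WGSP, and $1$-budget balance (i.e., exact budget balance), together with a social cost guarantee of $2 H_n$. The only remaining task is to shave the factor of $2$ off the social cost bound.

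Next I would inspect how the cost sharing function defined in~\eqref{eq:max-avg-csr} simplifies in this regime. Because $c_j(R_j^\ell)/(n-\ell)$ is non-increasing in $\ell$ (by the symmetric XOS property applied to the chain $R_j^0 \supset R_j^1 \supset \cdots$), the maximum in~\eqref{eq:max-avg-csr} is attained at $\ell = k$, so $\chi_j(R_j^k) = c_j(R_j^k)/|R_j^k|$, the exact average cost share. Thus $\M$ never overcharges on any item, and it behaves as a family of parallel Moulin--Shenker average cost sharing processes, one per item, coupled only through the agents' bundle selections.

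With this simplification in hand, I would retrace the social cost analysis of Theorem~\ref{thm:main} to identify where the factor of $2$ enters. In the general $\alpha$-average decreasing case, one must separately charge (i) the overcharge relative to the true cost and (ii) the optimal per-agent value loss, each of which contributes an $\alpha H_n$ term. Here the overcharge vanishes, so only contribution (ii) survives. Since both the cost functions (by the separable model) and the symmetric submodular valuations decompose naturally by cardinality, the social cost splits item-by-item; on each item $j$, the standard Moulin--Shenker telescoping argument (using the monotone growth of $\chi_j$ along the trace and summing $\sum_{k=1}^{n} 1/k = H_n$) bounds the per-item contribution by $H_n$ times the per-item optimum. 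Summing over $j \in M$ yields $\pi(A) \le H_n \cdot \pi(A^*)$.

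The main obstacle I anticipate is verifying that the per-item decomposition of the social cost lines up correctly with a per-item decomposition of the \emph{optimal} allocation $A^*$. Since valuations are symmetric, item identities do not matter in the optimum: one may assume without loss of generality that $A^*$ assigns to each player a set consisting of the items with the smallest cost contributions of the appropriate cardinality, which is structurally compatible with $\M$'s tie-breaking rule (cheapest items first). This structural alignment is what lets us charge the mechanism's per-item cost against the optimum's per-item cost rather than against some intractable combinatorial tangle. Once this alignment is made precise, the remainder is the classical Moulin--Shenker $H_n$ calculation.
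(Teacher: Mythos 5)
Your first two steps are right and match the paper: symmetric submodular cost functions are symmetric XOS, hence $1$-average decreasing, and Theorem~\ref{thm:main} with $\alpha=1$ delivers everything except that it only gives a $2H_n$ social cost bound. Your observation that the maximum in \eqref{eq:max-avg-csr} is then attained at $\ell=k$, so the cost shares equal the exact average cost, is also correct (modulo a wording slip: the averages are non-\emph{decreasing} along the trace, which is why the last term dominates).

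The gap is in your diagnosis of where the factor $2$ comes from and hence in how you remove it. You attribute it to overcharging, but at $\alpha=1$ the mechanism is already exactly budget balanced --- there is no overcharge --- and yet Theorem~\ref{thm:main} still only yields $2H_n$. The true source of the $2$ is Lemma~\ref{lem:adiffer-2}: in the charging argument one must bound $c_j(\set{i,\dots,n})$, the cost on the residual set of active players, by the cost of a set $S$ of the \emph{same cardinality} drawn from (or containing) $\T^*_j$, and for a general $1$-average-decreasing function two equal-size sets can have costs differing by a factor of $2$ (such functions need not be symmetric; see the example in Lemma~\ref{lem:subadditivity}). What kills the $2$ is the \emph{symmetry} of the cost function: $|S|\le|T|$ then directly gives $c(S)\le c(T)$, so the $2\alpha$ in Lemma~\ref{lem:adiffer-2} becomes $1$ in inequalities \eqref{eq:asc-c1}. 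Your proposed retracing never invokes symmetry at this step, so the $2$ would survive; and the alternative route you sketch --- realigning the optimal allocation item-by-item with $\M$'s cheapest-items tie-breaking and then running ``the classical Moulin--Shenker calculation'' --- is not substantiated: permuting items in $A^*$ preserves the (symmetric) valuations but changes the sets $\T^*_j$ and hence the cost, so the claimed ``without loss of generality'' does not hold as stated. The paper's own fix is a one-line substitution into the existing charging argument; yours would need the symmetry observation inserted at exactly that point to go through.
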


\begin{proof}
Applying Theorem \ref{thm:main} with $\alpha=1$, we get immediately all the claimed properties except for the social cost approximation which is $2H_n$.
	However, we note that for symmetric submodular cost functions the term $2\alpha$ in Lemma~\ref{lem:adiffer-2} (see Section \ref{subsec:sum}) can simply be omitted, because we have $c(S) \leq c(T)$ for any two sets $S, T$ with $|S|\leq |T|$. 
	By exploiting this in the remaining proof of the social cost approximation guarantee, we obtain an $H_n$-approximation. 
\end{proof}

Note that the approximation factor of $H_n$ for symmetric submodular functions is tight: The impossibility result of Dobzinski et al.~\cite{DobzinskiMRS18} for a single public good implies that achieving a better approximation ratio is impossible, even in the single-item case ($m = 1$).


Finally we point out that \emph{$\alpha$-average-decreasing} functions are subadditive when $\alpha=1$, while this is not necessarily true for $\alpha>1$.

\begin{lemma}\label{lem:subadditivity}  
	Let $c(\cdot)$ be an $\alpha$-average-decreasing cost function where $\alpha=1$. Then $c(\cdot)$ is subadditive and in addition, not necessarily symmetric, or submodular. In case $c(\cdot)$ is $\alpha$-average-decreasing with $\alpha>1$, then $c(\cdot)$ is not necessarily subadditive.
\end{lemma}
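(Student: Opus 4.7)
The plan is to handle the three claims separately, starting with the positive subadditivity statement and then producing small counterexamples for the remaining negative statements.

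For the subadditivity part, I would fix arbitrary $S, T \subseteq N$, let $U = S \cup T$, and apply the $1$-average-decreasing property twice: since $S \subseteq U$ and $T \subseteq U$, we get $c(S) \ge \frac{|S|}{|U|} c(U)$ and $c(T) \ge \frac{|T|}{|U|} c(U)$. Adding these inequalities and noting that $|S| + |T| \ge |S \cup T| = |U|$ immediately gives $c(S) + c(T) \ge c(U)$, which is subadditivity. No obstacle here.

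For the claim that $1$-average-decreasing does not force symmetry, I would exhibit a tiny asymmetric two-element instance, e.g.\ $N = \{1,2\}$ with $c(\emptyset) = 0$, $c(\{1\}) = 1$, $c(\{2\}) = 2$, $c(\{1,2\}) = 2$. A direct check (the only non-trivial inclusion pairs are each singleton inside $\{1,2\}$, giving $1 \ge 1$ and $2 \ge 1$) shows that $c$ is $1$-average-decreasing, monotone, yet clearly not symmetric.

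For non-submodularity under $\alpha = 1$, I would use a three-element example such as $N = \{1,2,3\}$ with $c(\emptyset) = 0$, $c(\{i\}) = 1$ for each singleton, $c(\{i,j\}) = 1$ for each pair, and $c(\{1,2,3\}) = 3/2$. All chains $S \subsetneq T$ satisfy $c(S)/|S| \ge c(T)/|T|$ (the tightest comparison is $1/2 \ge (3/2)/3 = 1/2$), so $c$ is $1$-average-decreasing. However, taking the element $3$ together with $S = \{1\} \subseteq T = \{1,2\}$ gives marginal contributions $c(\{1,3\}) - c(\{1\}) = 0$ and $c(\{1,2,3\}) - c(\{1,2\}) = 1/2$, which violates the submodular inequality. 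Finally, for the last claim, I would take $N = \{1,2\}$ with $c(\emptyset) = 0$, $c(\{1\}) = c(\{2\}) = 1$, $c(\{1,2\}) = 3$: this is $(3/2)$-average-decreasing but $c(\{1,2\}) = 3 > 2 = c(\{1\}) + c(\{2\})$, so it is not subadditive. The only mildly subtle step is verifying the right inequalities in the three-element example; everything else is a short calculation.
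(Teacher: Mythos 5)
Your proof is correct and follows essentially the same route as the paper: subadditivity is derived by applying the average-decreasing inequality to $S \cup T$ (you use both inequalities symmetrically where the paper takes a WLOG on which set has smaller average cost, but the idea is identical), and the negative claims are settled by small explicit counterexamples of the same flavor as the paper's (the paper uses a single three-element example for both non-symmetry and non-submodularity, and a cardinality-based function for the $\alpha>1$ case). All of your examples check out.
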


\begin{proof}
Let $S, T \subseteq N$ and assume without loss of generality that 
	$\frac{c(S)}{|S|} \leq \frac{c(T)}{|T|}$. 
	Using that the average cost of $c$ is non-increasing, we obtain 
	\begin{align*}
	c(S\cup T) 
	& 
	\leq |S\cup T|\cdot \frac{c(S)}{|S|} 
	\leq c(S) + |T| \frac{c(S)}{|S|} \leq c(S)+c(T).
	\end{align*}
	Hence $c$ is subadditive.
	
	We construct now an example of a non-symmetric and non-submodular cost function that is average-decreasing. Let $N = \set{1, 2, 3}$ and consider the function $c$ defined as follows: $c(\{1\})=5$, $c(\{2\})=7$, $c(\{3\})=8$, $c(\{1,2\})=10$, $c(\{2,3\}) = c(\{1,3\})=9$, and $c(\{1,2, 3\}) = 11$. It is easy to verify that $c$ has decreasing average cost, but obviously, it is not a symmetric function; in fact, it is not even submodular because $c(\{1,2, 3 \})-c(\{1, 3\})=11-9=2>1=9-8=c(\{2, 3\})-c(\{3\})$.     
	
	Finally consider the following example:
\[	c(S) = 
	\begin{cases}
	0 & \text{if $S=\emptyset$} \\
	1 & \text{if $|S|=1,2$} \\
	3 & \text{if $|S|\geq 3$} 
	\end{cases} 
\]
	
	It is easy to see that this function is    \emph{2-average-decreasing} (just consider sets of cardinality 2 and supersets of cardinality 3 and notice that this case gives the maximum possible $\alpha$). Regarding non-subadditivity, consider sets $S\subseteq T \subseteq N$ where $|S|=1, |T|=2$ and $S\cap T=\emptyset$. We have that $c(S\cup T)=3>2=1+1=c(S)+c(T)$.
\end{proof}

The remainder of this section is devoted to the proof of Theorem~\ref{thm:main}. 
Unless stated otherwise, we assume below that the valuation functions are symmetric submodular and the cost functions are $\alpha$-average decreasing.

\subsection{Computational Efficiency}

We argue that $\M$ can be implemented in polynomial time. Clearly, the mechanism terminates after $n$ iterations. 
In each iteration, the only non-trivial operations are (i) to compute the optimal bundles for all active players (Line 3) and (ii) to update the cost shares of the items (Line 6). All other operations can be implemented to run in time $O(n+m)$. We prove below that (i) and (ii) can be done in polynomial time. 

We first show that optimal bundles can be computed efficiently if the valuation functions are symmetric submodular.  

\begin{lemma}\label{cl:utility-max}
	If the valuation functions are symmetric submodular, then an optimal bundle for player $i$ can be computed in polynomial time.
\end{lemma}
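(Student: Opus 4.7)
The plan is to exploit the symmetry of $b_i$ to reduce the combinatorial maximization over $2^M$ to a one-dimensional search over the $m+1$ possible bundle sizes, and then use the additive structure of the payment $p_i(\cdot)$ to pick the optimal bundle of each fixed size greedily.

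First I would observe that since $b_i$ is symmetric, $b_i(S)$ depends only on $|S|$; define $\hat{b}_i(k) := b_i(S)$ for any $S$ with $|S|=k$. Since the payment $p_i(S) = \sum_{j \in S} \chi_j(T_j)$ is additive and the cost shares $\chi_j(T_j)$ are fixed (they do not depend on the identity of the bidder), for any \emph{fixed} size $k$ the bundle maximizing $b_i(S) - p_i(S)$ over $\{S \subseteq M : |S| = k\}$ is exactly the set $S_k$ consisting of the $k$ items with the smallest cost shares (breaking ties by index as stipulated in Section~\ref{subsec:def-iacsm}). This bundle can be identified in $O(m \log m)$ time by sorting the items in non-decreasing order of $\chi_j(T_j)$ and taking the first $k$ of them.

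Next I would iterate over all sizes $k \in \{0,1,\dots,m\}$: for each $k$, compute $u_k := \hat{b}_i(k) - \sum_{j \in S_k} \chi_j(T_j)$. This requires only $m+1$ valuation queries and can be done incrementally in total time $O(m)$. The maximum utility is $u^* = \max_{k} u_k$, and by the tie-breaking rule the optimal bundle $A_i$ is $S_{k^*}$ for $k^* = \max\{k : u_k = u^*\}$ (largest size, then the $k^*$ cheapest items). Overall, the procedure runs in $O(m \log m)$ time plus $m+1$ value queries.

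There is no real obstacle here; the only subtlety worth flagging is that submodularity of $b_i$ is not needed for this lemma (symmetry and additivity of the payments suffice), but it is consistent with the hypothesis of the section and guarantees, in addition, that the sequence $(u_k)_k$ is concave in $k$, so that a single pass along increasing $k$ until $u_k$ stops increasing would also suffice. Either implementation clearly runs in polynomial time, establishing the lemma.
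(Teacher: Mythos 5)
Your proof is correct, and it rests on the same key observation as the paper's: by symmetry of $b_i$ and additivity of $p_i(\cdot)$, the best bundle of each fixed size $k$ is the set of the $k$ cheapest items, so the search collapses to $m+1$ candidates. Where you differ is in how you finish: the paper runs a greedy that adds items in increasing price order and \emph{stops} the first time the marginal value drops strictly below the next price, using submodularity to argue that the utility can only decrease from that point on; you instead simply evaluate all $m+1$ candidate sizes and take the (largest) maximizer, which needs no submodularity at all. Your version is slightly more general and makes the tie-breaking rule (maximum size among optimal bundles, then cheapest items) transparently satisfied, since any optimal bundle of size $k$ is dominated by $S_k$ and hence $k^*=\max\{k: u_k=u^*\}$ is exactly the maximum size of an optimal bundle; the paper's version buys an early-exit greedy at the cost of invoking submodularity. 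One small caution on your closing aside: if you do use the single-pass variant, you must continue while the marginal utility is \emph{non-negative} rather than strictly positive, otherwise you would return a smaller bundle than the maximum-size tie-breaking rule demands; your main enumeration-based implementation handles this correctly.
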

\begin{proof}
	Fix an arbitrary iteration and let $i \in X$ be an active player. We need to show that we can efficiently compute an optimal bundle as defined in \eqref{eq:opt-choice}. 
	Recall that the value that player $i$ derives from a bundle of items only depends on its cardinality (because the valuation functions are symmetric submodular). Thus, to build an optimal bundle, we can start with the empty set and iteratively add an item of lowest price as long as this price is at most the added marginal value. As a result, the value of the constructed bundle (weakly) increases whenever we add a new item. At the same time, the marginal value of the added items (weakly) decreases because of submodularity. Thus, the first time we consider an item whose marginal value is (strictly) less than its price, we have to stop as the utility can only decrease if we add any of the remaining items. Note that the optimal bundle constructed in this way satisfies the tie-breaking rule described above. Clearly, this procedure stops after at most $m$ iterations. 
\end{proof}

We note that for non-symmetric valuation functions, the complexity of $\M$ depends on the time needed to compute an optimal bundle. For example, if one has access to demand queries, we can still have an efficient implementation with polynomially many queries.

We next turn to the computation of the cost shares. Note that in general it is not clear whether the cost shares as defined in \eqref{eq:max-avg-csr} can be computed efficiently (as there might be exponentially many supersets that need to be considered). However, as we show below, the cost shares of the items which are constructed throughout the execution of our mechanism $\M$ can be computed efficiently. 

\begin{lemma}\label{lem:cost-share-comp}
	The cost shares as defined in \eqref{eq:max-avg-csr} can be computed efficiently throughout the execution of mechanism $\M$.
\end{lemma}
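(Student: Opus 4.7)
The plan is to exploit the observation that although the definition in Equation~\eqref{eq:max-avg-csr} involves a maximum that, a priori, could range over exponentially many subsets of $N$, during any actual run of $\M$ the mechanism only ever needs the value of $\chi_j$ at the specific sets $R_j^0, R_j^1, \dots, R_j^{|\tau_j|}$ produced by the trace. These form a descending chain of length at most $n+1$, since each iteration of $\M$ removes at most one player from each $T_j$ (Line~6).

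The key step is to maintain the cost shares incrementally, piggybacking on the update in Line~6. At initialization (Line~1), for each item $j$ I would set $\chi_j(R_j^0) = \chi_j(N) = c_j(N)/n$. Now suppose at the end of some iteration the mechanism removes a player $i^*$ from $T_j$, so that the tentative set transitions from $R_j^k$ to $R_j^{k+1} = R_j^k \setminus \{i^*\}$. Then by splitting off the new index from the max in Equation~\eqref{eq:max-avg-csr},
\begin{equation*}
\chi_j(R_j^{k+1}) \;=\; \max_{\ell \in \{0,\dots,k+1\}} \frac{c_j(R_j^\ell)}{n-\ell} \;=\; \max\!\left(\chi_j(R_j^k),\, \frac{c_j(R_j^{k+1})}{n-(k+1)}\right),
\end{equation*}
so the updated cost share can be produced from the previous one using a single query to $c_j$ and one comparison.

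Summing over the execution, each of the $m$ items undergoes at most $n$ such updates (one per iteration of the main loop), so the whole bookkeeping uses $O(nm)$ oracle calls to the cost functions and the same order of arithmetic operations. I do not see any real obstacle in this argument; the only subtlety worth flagging is that trace-monotonicity (Lemma~\ref{lem:average-cs}) guarantees $\chi_j$ is well-defined on the sequence of sets actually encountered, so the recursion is consistent with the definition used in the analysis of the mechanism.
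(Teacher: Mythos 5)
Your proposal is correct and matches the paper's own argument essentially verbatim: both initialize $\chi_j(N)=c_j(N)/n$ and maintain the cost share via the incremental update $\chi_j(T_j\setminus\{i^*\})=\max\bigl(\chi_j(T_j),\,c_j(T_j\setminus\{i^*\})/|T_j\setminus\{i^*\}|\bigr)$, exploiting that only the sets along the trace ever need to be evaluated. The explicit $O(nm)$ oracle-call count is a small additional bonus not spelled out in the paper.
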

\begin{proof}
	Let $\tau = (1, 2, \dots, n)$ be the player order induced by $\M$. 
	Fix an arbitrary item $j \in M$. After the initialization, we have $T_j = N$ and the cost share of $j$ is thus $\chi_j(R_j^0) = c_j(N)/|N|$. Clearly, as long as the set of players $T_j$ who are tentatively assigned to item $j$ does not change, the cost share of $j$ remains the same. Suppose that $T_j$ changes at the end of iteration $i$ because player $i$ withdraws from item $j$, resulting in a new set $T'_j = T_j \setminus \set{i}$. The new cost share $\chi_j(T'_j)$ of item $j$ can then be determined simply by taking the maximum of the current cost share $\chi_j(T_j)$ and $c_j(T'_j)/|T'_j|$. Note that this update ensures that the cost share definition in \eqref{eq:max-avg-csr} is met. 
\end{proof}


\subsection{IR, NPT and WGSP}
\label{subsec:prop}

The individual rationality and the no positive transfers properties follow directly from the definition of the mechanism.  The WGSP property is established by the following lemma.


\begin{lemma}\label{lem:tr}
	Mechanism $\M$ is weakly group-strategyproof.
\end{lemma}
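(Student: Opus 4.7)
The plan is to argue by contradiction. Suppose there exist a coalition $Q \subseteq N$ and a joint deviation $\vec{b}_Q$ such that $u_i(\vec{b}_Q,\vec{v}_{-Q}) > u_i(\vec{v})$ for every $i \in Q$. I would run $\M$ in parallel on both profiles, and write $t^v(i)$ and $t^b(i)$ for the iteration at which player $i$ is finalized in the truthful and the deviating run, respectively, and $\chi^{t}_j$ for the cost share of item $j$ at the start of iteration $t$.

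The first step exploits Property~(P1). For each player $i$ and each run, define $u^{\ast}(i,t) := \max_{S \subseteq M}\{v_i(S) - \sum_{j\in S}\chi^{t}_j\}$, the maximum true utility $i$ could extract from the cost shares of iteration $t$. Trace-monotonicity makes $u^{\ast}(i,\cdot)$ non-increasing. Since $A^v_i$ is optimal for $v_i$ at $t^v(i)$ and, by Property~(P2), $i$'s payment does not change after $t^v(i)$ in the truthful run, $u_i(\vec{v}) = u^{\ast}(i,t^v(i))$; in the deviating run $u_i(\vec{b}_Q,\vec{v}_{-Q}) \le u^{\ast}(i,t^b(i))$, because $A^b_i$ is optimal only for $b_i$ and later cost-share updates can only raise the actual payment. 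Strict benefit therefore forces $t^b(i) < t^v(i)$ for every $i \in Q$, and hence no coalition member is finalized before $t^{\ast} := \min_{i\in Q}t^b(i)$ in either run.

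A straightforward coupling argument then shows that the two runs coincide iteration-by-iteration up to $t^{\ast}$: non-coalition players have identical optimal bundles in the two runs, so the deterministic tie-breaking rule picks the same finalized player. At iteration $t^{\ast}$ the deviating run finalizes some $i^{\ast\ast} \in Q$ while the truthful run finalizes a non-coalition player $j^{\ast}$ (since $t^v(i) > t^{\ast}$ for every $i \in Q$). The cost shares $\chi^{\ast} := (\chi^{t^{\ast}}_j)_{j \in M}$ coincide in both runs at that point and $|A^b_{i^{\ast\ast}}| \le |A_{j^{\ast}}|$ by the selection rule. Let $\tilde{A}$ be the set of the $|A^b_{i^{\ast\ast}}|$ cheapest items at $\chi^{\ast}$ (with ties broken by item index). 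By symmetry of $v_{i^{\ast\ast}}$, $v_{i^{\ast\ast}}(\tilde{A}) = v_{i^{\ast\ast}}(A^b_{i^{\ast\ast}})$, and by construction $\sum_{j \in \tilde{A}} \chi^{\ast}_j \le \sum_{j \in A^b_{i^{\ast\ast}}} \chi^{\ast}_j$, so the comparison bundle $\tilde{A}$ attains at $\chi^{\ast}$ at least the deviating utility $u_{i^{\ast\ast}}(\vec{b}_Q,\vec{v}_{-Q})$. Moreover, since $v_{j^{\ast}}$ is symmetric submodular, the tie-breaking rule forces $A_{j^{\ast}}$ to consist of the $|A_{j^{\ast}}|$ cheapest items at $\chi^{\ast}$, which yields $\tilde{A} \subseteq A_{j^{\ast}}$.

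The linchpin of the argument, and the step I expect will require the most care, is that the cost shares of the items of $\tilde{A}$ remain frozen at $\chi^{\ast}$ throughout the rest of the truthful run, up to the finalization of $i^{\ast\ast}$. Indeed, by Property~(P2) the bundles of the truthful run are nested in order of finalization, so every player finalized at an iteration in $\{t^{\ast}, t^{\ast}+1, \ldots, t^v(i^{\ast\ast})\}$ is assigned a bundle containing $A_{j^{\ast}}$, and in particular containing $\tilde{A}$, and therefore does not withdraw from any item of $\tilde{A}$. Consequently $\chi^{t^v(i^{\ast\ast})}_k = \chi^{\ast}_k$ for every $k \in \tilde{A}$. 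Using the optimality of $A^v_{i^{\ast\ast}}$ at cost shares $\chi^{t^v(i^{\ast\ast})}$ with comparison bundle $\tilde{A}$ then yields
\[
u_{i^{\ast\ast}}(\vec{v}) \;\ge\; v_{i^{\ast\ast}}(\tilde{A}) - \sum_{k \in \tilde{A}} \chi^{t^v(i^{\ast\ast})}_k \;=\; v_{i^{\ast\ast}}(\tilde{A}) - \sum_{k \in \tilde{A}} \chi^{\ast}_k \;\ge\; u_{i^{\ast\ast}}(\vec{b}_Q,\vec{v}_{-Q}),
\]
which contradicts strict benefit for $i^{\ast\ast}$ and so establishes weak group-strategyproofness.
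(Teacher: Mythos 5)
Your overall architecture is the same as the paper's: couple the truthful and deviating runs, locate the first point of divergence, and split into the case where the coalition member is the one finalized early in the deviating run (handled via Property~(P2), nestedness of the bundles, and the fact that the cost shares of the items in $A_{j^*}$ stay frozen) versus the case where a coalition member sits at the divergence point of the truthful run (handled via Property~(P1)). Your ``Case 2'' analysis is sound and even slightly more robust than the paper's, since replacing $A^b_{i^{**}}$ by the surrogate $\tilde{A}$ of equally many cheapest items avoids having to assume that the \emph{reported} valuation of $i^{**}$ is symmetric.

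However, there is a genuine gap in your first step. You define $u^{\ast}(i,t)$ separately for each run and invoke trace-monotonicity to make it non-increasing in $t$ \emph{within} a run, but then you compare $u^{\ast}(i,t^b(i))$ (computed from the deviating run's cost shares) against $u^{\ast}(i,t^v(i))$ (computed from the truthful run's cost shares) as if they were values of a single monotone function, and conclude $t^b(i) < t^v(i)$ for every $i \in Q$. This inference is invalid: after the two runs diverge, their cost-share trajectories are incomparable --- a deviation can keep some items' cost shares \emph{lower} in the deviating run than they are at the same iteration index of the truthful run --- so $t^b(i) \ge t^v(i)$ together with $u^{\ast,b}(i,t^b(i)) > u^{\ast,v}(i,t^v(i))$ is entirely possible. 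As a result, the scenario in which the first divergence iteration finalizes a coalition member in the \emph{truthful} run (while the deviating run finalizes someone else, or the same player with a larger bundle) is not ruled out by your argument, and your subsequent claims that the runs coincide up to $t^{\ast} = \min_{i\in Q} t^b(i)$ and that the truthful run finalizes a non-coalition player there can then fail. The repair is to localize the comparison to the first divergence iteration $t_0$, where the cost shares of the two runs provably coincide: for a coalition member $i$ finalized at $t_0$ in the truthful run one has $u_i(\vec{v}) = u^{\ast}(i,t_0) \ge u^{\ast,b}(i,t^b(i)) \ge u_i(\vec{b}_Q,\vec{v}_{-Q})$ (using $t^b(i)\ge t_0$ and monotonicity within the deviating run only), so that player does not strictly gain; this is exactly the paper's Case~1, and with it in place the rest of your proof goes through.
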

\begin{proof}    
	Fix a coalition $Q \subseteq N$. Let $I$ be the instance in which all players in $Q$ report their valuations truthfully and let $\hat{I}$ be an instance in which all players in $Q$ misreport their valuations. We need to prove that not every player $i \in Q$ strictly improves her utility by misreporting. 
	
	Consider the runs of $\M$ on instances $I$ and $\hat{I}$, respectively. 
	Let $\tau$ and $\hat{\tau}$ be the player orders induced by $\M$ for $I$ and $\hat{I}$, respectively.
	We assume without loss of generality that in the run on $I$, player $i$ is considered in iteration $i$, i.e., ${\tau} = (1, \dots, n)$.
	\footnote{Note that the mechanism might terminate before iteration $n$, but for the analysis it will be convenient to assume that it uses exactly $n$ iterations. Conceptually, simply assume that the players who become inactive all at once in the final iteration are removed one-by-one (using an arbitrary but consistent tie breaking rule).} 
	Let $\hat{\tau} = (\hat{\tau}_1, \dots, \hat{\tau}_n)$ be the order in which the players are considered in the run on $\hat{I}$. 
	Let $i$ be the first iteration in which either (a) the considered players $i$ and $\hat{\tau}_i$ differ, i.e., $i \neq \hat{\tau}_i$, or (b) the same player $i = \hat{\tau}_i$ is considered, but the bundles allocated in $I$ and $\hat{I}$ differ. 
	Note that such an $i$ must exist as otherwise both runs return the same allocation and we are done. By the choice of $i$, at the beginning of iteration $i$ the cost share of each item is the same in $I$ and $\hat{I}$, and can only increase subsequently. We distinguish two cases: 
	
	\emph{Case 1: $i \in Q$.} The bundle $A_i$ allocated to player $i$ in $I$ is chosen such that 
	\begin{equation}\label{eq:max-bundle}
	\dv_i(A_i) - p_i(A_i) \ge \dv_i(S) - p_i(S) \qquad \forall S \subseteq M, 
	\end{equation}
	where $p_i(S)$ is the sum of the cost shares of the items in $S$ in iteration $i$ in $I$ (which are the same as in $\hat{I}$). 
	Suppose player $i$ is considered in iteration $k$ in the run on $\hat{I}$, i.e., $\hat{\tau}_k = i$. By the choice of $i$, we have $k \ge i$. Thus, in iteration $k$ in $\hat{I}$ the price $\hat{p}_i(S)$ of $i$ for each bundle $S \subseteq M$ must satisfy $\hat{p}_i(S) \ge p_i(S)$. In particular, for the bundle $\hat{A}_i$ allocated to player $i$ in iteration $k$ in $\hat{I}$ we have 
	$$
	\dv_i(\hat{A}_i) - \hat{p}_i(\hat{A}_i) \le 
	\dv_i(\hat{A}_i) - p_i(\hat{A}_i) \le 
	\dv_i(A_i) - p_i(A_i), 
	$$
	where the last inequality follows from \eqref{eq:max-bundle}. 
	The claim follows because $i$ is part of the deviating coalition $Q$.

	\emph{Case 2: $i \notin Q$.} 
	We first argue that player $k = \hat{\tau}_i \in Q$. Assume that $k \notin Q$. Then both $i$ and $k$ bid truthfully in iteration $i$ in $I$ and $\hat{I}$. Because in iteration $i$ the cost shares are the same in both runs, the bundles chosen by $i$ in $I$ and $\hat{I}$ are the same. The same holds for player $k$. If $i \neq k$ then this is a contradiction to the assumption that $\M$ uses a consistent tie breaking rule (as $i$ is chosen in $I$ but $k$ in $\hat{I}$). If $i = k$ then this is a contradiction to our choice of iteration $i$ (as $i$ chooses the same bundle in $I$ and $\hat{I}$). We conclude that $k \in Q$. 
	
	We now compare the utility obtained by player $k = \hat{\tau}_i \in Q$ in $I$ and $\hat{I}$. 
	Note that $k \neq i$ because $k \in Q$ and $i \notin Q$. Observe that in iteration $i$ in instance $\hat{I}$, player $i$ reports truthfully and thus opts for the same bundle $A_i$ as in iteration $i$ in $I$. Given that player $k$ is chosen in iteration $i$ in $\hat{I}$ (and not player $i$), the bundle $\hat{A}_k$ allocated to $k$ satisfies $\hat{A}_k  \subseteq A_i$. 
	
	Now, consider the run on $I$ and let $k > i$ be the iteration in which player $k = \hat{\tau}_i$ is considered. By Property (P2), we  have $A_i \subseteq A_k$. We conclude that $\hat{A}_k \subseteq A_k$. 
	Because $k$ reports truthfully in $I$, the choice of $A_k$ implies that
	$$
	\dv_k(A_k) - p_k(A_k) \ge \dv_k(S) - p_k(S) \qquad \forall S \subseteq M.
	$$
	In particular, for the bundle $\hat{A}_k$ this implies that
	\begin{equation}\label{eq:pos-add}
	\dv_k(A_k) - \dv_k(\hat{A}_k) - p_k(A_k \setminus \hat{A}_k) \ge 0.
	\end{equation}
	Further, note that the cost shares of all items in $A_i$ and $\hat{A}_k$ remain the same as in iteration $i$ in the runs on $I$ and $\hat{I}$, respectively. Exploiting  that $\hat{A}_k \subseteq A_i$ and that the cost shares in iteration $i$ are the same in both runs, we conclude that $p_k(\hat{A}_k) = \hat{p}_k(\hat{A}_k)$. 
	We obtain
	\begin{align*}
	\dv_k(A_k) - p_k(A_k) 
	& =\dv_k(\hat{A}_k) - p_k(\hat{A}_k) + [\dv_k(A_k) - \dv_k(\hat{A}_k) - p_k(A_k \setminus \hat{A}_k)] \\
	& \ge \dv_k(\hat{A}_k) - {p}_k(\hat{A}_k) =\dv_k(\hat{A}_k) - \hat{p}_k(\hat{A}_k),
	\end{align*}
	where the first equality holds because $p_k$ is additive and the inequality follows from \eqref{eq:pos-add}. 
	The claim now follows because $k \in Q$.
\end{proof}

\subsection{Budget Balance and Social Cost Approximation}
\label{subsec:sum}

We start by providing the budget balance performance of the mechanism. 
\begin{lemma}\label{lem:bb} 
	Mechanism $\M$ is $\alpha$-budget balanced.
\end{lemma}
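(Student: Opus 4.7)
The plan is to verify the two inequalities $C(A) \le \sum_i p_i \le \alpha \cdot C(A)$ item-by-item and then sum over $j \in M$. Recall that the final payment of each player is $p_i = \sum_{j \in A_i} \chi_j(T_j)$, and since every player in the final set $T_j$ is charged the same cost share for item $j$, we can rewrite
\[
\sum_{i \in N} p_i \;=\; \sum_{j \in M} |T_j| \cdot \chi_j(T_j).
\]
Hence it suffices to establish $c_j(T_j) \le |T_j| \cdot \chi_j(T_j) \le \alpha \cdot c_j(T_j)$ for each item $j \in M$, and then invoke the definition $C(A) = \sum_{j \in M} c_j(T_j)$ from~\eqref{eq:cost-def}.

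For the lower bound, I would observe that the final set of players assigned to $j$ satisfies $T_j = R_j^{|\tau_j|}$, so the maximum in the cost share definition~\eqref{eq:max-avg-csr} is taken over $\ell \in \{0, \dots, |\tau_j|\}$, and in particular includes the term $c_j(R_j^{|\tau_j|})/|R_j^{|\tau_j|}| = c_j(T_j)/|T_j|$. Thus $\chi_j(T_j) \ge c_j(T_j)/|T_j|$, and multiplying through by $|T_j|$ yields $|T_j| \cdot \chi_j(T_j) \ge c_j(T_j)$, which establishes cost recovery.

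For the upper bound, let $\ell^\ast \in \{0, \dots, |\tau_j|\}$ be an index achieving the maximum in~\eqref{eq:max-avg-csr}, so that $\chi_j(T_j) = c_j(R_j^{\ell^\ast})/|R_j^{\ell^\ast}|$. Since $\ell^\ast \le |\tau_j|$ and the sets $R_j^0 \supset R_j^1 \supset \dots \supset R_j^{|\tau_j|}$ form a decreasing chain, we have $T_j \subseteq R_j^{\ell^\ast}$. Applying the $\alpha$-average-decreasing property (Definition~\ref{def:avg-dec}) with $S = T_j$ and $T = R_j^{\ell^\ast}$ gives
\[
\frac{c_j(R_j^{\ell^\ast})}{|R_j^{\ell^\ast}|} \;\le\; \alpha \cdot \frac{c_j(T_j)}{|T_j|},
\]
and multiplying by $|T_j|$ yields $|T_j| \cdot \chi_j(T_j) \le \alpha \cdot c_j(T_j)$. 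Summing both bounds over $j \in M$ completes the proof. There is no serious obstacle here: the core insight is that the trace-monotonic construction~\eqref{eq:max-avg-csr} may overcharge precisely by taking the maximum average cost across a superset chain, and the $\alpha$-average-decreasing parameter is tailor-made to bound this overcharge by a factor of $\alpha$.
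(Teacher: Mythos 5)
Your proof is correct and follows essentially the same route as the paper's: both arguments bound $|T_j|\cdot\chi_j(T_j)$ for each item by noting that the maximum in \eqref{eq:max-avg-csr} includes the term $c_j(T_j)/|T_j|$ (giving cost recovery) and by applying the $\alpha$-average-decreasing property to $T_j \subseteq R_j^{\ell^\ast}$ (giving the factor-$\alpha$ upper bound), then summing over items. No gaps.
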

\begin{proof}
	Let the player order induced by $\M$ be $\tau = (1, \dots, n)$. Fix an arbitrary item $j$ and let $\tau_j$ be the trace of item $j$. Let $T_j$ be the final set of players allocated to item $j$ and recall that $R_j^k = T_j$ for $k = |\tau_j|$. 
	By the definition of the cost sharing function $\chi_j$ in \eqref{eq:max-avg-csr}, there is some set $S = R^{\ell}_j$ with $\ell \in \set{0, \dots, k}$ such that the cost share of item $j$ is
	\begin{equation}\label{eq:alpha-bb}
	\chi_j(T_j) = \frac{c_j(R^\ell_{j})}{n-\ell} \ge \frac{c_j(R^k_{j})}{n-k} = \frac{c_j(T_j)}{|T_j|}.
	\end{equation}
	Summing over all players in $T_j$, we obtain 
	\begin{equation}\label{eq:alpha-p}
	c_j(T_j) 
	\leq \sum_{i\in T_j}\chi_j(T_j)=|T_j|\cdot \frac{c_j(S)}{|S|} \leq |T_j|\cdot \alpha \frac{c_j(T_j)}{|T_j|}={\alpha} \cdot c_j(T_j),
	\end{equation}
	where the second inequality holds because $c_j$ is $\alpha$-average-decreasing and $T_j \subseteq S$. 
	
	Finally, summing inequality \eqref{eq:alpha-p} over all items $j \in M$ we obtain 
	\begin{equation}\label{eq:alpha-tbb}
	\sum_{j \in M}c_j(T_j) 
	\leq \sum_{j \in M}\sum_{i\in T_j}\chi_j(T_j) 
	\leq \alpha\sum_{j \in M}c_j(T_j), 
	\end{equation}
	which proves the claim.
\end{proof}

\bigskip\noindent
We now show that our mechanism $\M$ is $2\alpha^3 H_n$-approximate with respect to the social cost objective for symmetric submodular valuation functions. 

Let $\A = (\A_i)_{i \in N}$ be the allocation computed by the mechanism, where $\A_i \subseteq M$ is the subset of items that player $i$ receives. As before, without loss of generality we assume that the player order induced by $\M$ is $\tau = (1, \dots, n)$. 
Recall that for every item $j \in M$, $\T_j = \sset{i \in N}{j \in \A_i}$ is the final set of players that receive item $j$. 
We also use $\T^i_j$ to refer to the subset of players who are allocated to item $j$ at the beginning of iteration $i$. Clearly, $\T^i_j \supseteq \T_j$ for every player $i$ and item $j$.


%

We first prove some lemmas which will be helpful later on. 

\begin{lemma}\label{lem:final}
	Fix an item $j \in M$ and let $i$ be the first player in $\tau$ such that $j \in \A_i$. Then $\T_j = \set{i, \dots, n}$.
\end{lemma}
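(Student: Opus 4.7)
The plan is to deduce this lemma almost immediately from Property (P2), which is established in Lemma~\ref{lem:refine} under the standing assumption that the valuations are symmetric submodular. Recall that (P2) guarantees $A_k \subseteq A_{k+1}$ for every $k \in \{1, \dots, n-1\}$, where $\tau = (1, \dots, n)$ is the player order induced by $\M$. Also recall that by definition $\T_j = \{k \in N : j \in \A_k\}$, since $T_j$ starts as $N$ and a player $k$ is removed from $T_j$ at the end of iteration $k$ precisely when $j \notin A_k$.

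First, I would argue the ``upper'' inclusion $\{i, \dots, n\} \subseteq \T_j$. By hypothesis $j \in \A_i$. Iterating (P2) yields $\A_i \subseteq \A_{i+1} \subseteq \dots \subseteq \A_n$, so $j \in \A_k$ for every $k \in \{i, \dots, n\}$. Hence none of these players withdraws from item $j$ in its corresponding iteration, and each of them remains in $T_j$ until the end of the execution.

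Next, I would argue the ``lower'' inclusion $\T_j \subseteq \{i, \dots, n\}$. By the minimality of $i$, for every $k < i$ we have $j \notin \A_k$, so at the end of iteration $k$ the mechanism executes Line~6 and sets $T_j \leftarrow T_j \setminus \{k\}$. Thus no player $k < i$ belongs to the final set $\T_j$.

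Combining the two inclusions gives $\T_j = \{i, i+1, \dots, n\}$, as desired. The proof is essentially a bookkeeping argument: there is no real obstacle once Property (P2) has been established (which is where the symmetric submodular assumption on the valuations enters, via Lemma~\ref{lem:refine}). The only subtlety worth flagging is the implicit assumption that the mechanism is viewed as executing exactly $n$ iterations with one player finalized per iteration, consistent with the tie-breaking convention already adopted in the proof of Lemma~\ref{lem:tr}.
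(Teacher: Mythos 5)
Your proof is correct and follows essentially the same route as the paper's: both derive the result directly from Property (P2) by iterating $\A_i \subseteq \A_{i+1} \subseteq \dots \subseteq \A_n$ and using the minimality of $i$ for the players preceding $i$. Your write-up is merely a bit more explicit about separating the two inclusions; there is no substantive difference.
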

\begin{proof}
	By the choice of $i$, we have that $j \in \A_i$ and $j \notin \A_k$ for every player $k < i$. 
	From Property (P2) it follows that for every $k$ with $i \le k < n$, $\A_{k} \subseteq \A_{k+1}$. Thus $j \in \A_{k}$ for every $i \le k \le n$, which concludes the proof.	
\end{proof}

\begin{lemma}\label{lem:opt-bundle}
	Consider player $i$ who becomes inactive in iteration $i$. 
	We have 
	$$
	\dv_i(\A_i) - \sum_{j \in \A_i} \chi_j(T_j) \ge 
	\dv_i(S) - \sum_{j \in S} \chi_j(T_j) \quad \forall S \subseteq M.
	$$
\end{lemma}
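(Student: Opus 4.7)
The plan is to bridge the gap between the cost shares seen by player $i$ in iteration $i$ (when $A_i$ was chosen as the optimal bundle) and the \emph{final} cost shares $\chi_j(T_j)$ appearing in the statement. Denote by $T_j^i$ the set of players tentatively assigned to item $j$ at the \emph{beginning} of iteration $i$. Then by the optimal-bundle rule \eqref{eq:opt-choice} of Mechanism $\M$, the bundle $A_i$ satisfies
\[
\dv_i(A_i) - \sum_{j \in A_i} \chi_j(T_j^i) \;\ge\; \dv_i(S) - \sum_{j \in S} \chi_j(T_j^i) \qquad \forall S \subseteq M.
\]
So I only need to compare $\chi_j(T_j^i)$ with $\chi_j(T_j)$ item by item.

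I would split the items into two groups. For $j \in A_i$: by Lemma~\ref{lem:final}, once some player (necessarily some $i^* \le i$) first acquires $j$, every later player also receives $j$ by the refinement property (P2). Therefore no player withdraws from $j$ at any iteration $\ge i^*$, which gives $T_j^i = T_j^{i+1} = \dots = T_j$, and in particular $\chi_j(T_j^i) = \chi_j(T_j)$. For $j \notin A_i$: player $i$ withdraws from $j$ at the end of iteration $i$, and perhaps further players withdraw in later iterations; applying Property (P1) (trace-monotonicity) to the realized trace $\tau_j$ step by step yields $\chi_j(T_j) \ge \chi_j(T_j^i)$.

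Putting the two cases together, for every $S \subseteq M$ we have
\[
\sum_{j \in S} \chi_j(T_j) \;=\; \sum_{j \in S \cap A_i} \chi_j(T_j^i) + \sum_{j \in S \setminus A_i} \chi_j(T_j) \;\ge\; \sum_{j \in S} \chi_j(T_j^i),
\]
while $\sum_{j \in A_i} \chi_j(T_j) = \sum_{j \in A_i} \chi_j(T_j^i)$. Combining these observations with the optimality of $A_i$ at iteration $i$ yields
\[
\dv_i(A_i) - \sum_{j \in A_i} \chi_j(T_j) \;\ge\; \dv_i(S) - \sum_{j \in S} \chi_j(T_j^i) \;\ge\; \dv_i(S) - \sum_{j \in S} \chi_j(T_j),
\]
which is the claimed inequality. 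No step is really the bottleneck here; the only subtlety is making sure to invoke Lemma~\ref{lem:final} (which in turn relies on (P2)) for items in $A_i$ to rule out any late withdrawals that would strictly increase their cost shares and thereby break the equality on $A_i$.
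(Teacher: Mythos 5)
Your proposal is correct and follows essentially the same route as the paper's own proof: start from the optimality of $A_i$ with respect to the iteration-$i$ cost shares $\chi_j(T^i_j)$, use Lemma~\ref{lem:final} (via Property (P2)) to get $\chi_j(T^i_j)=\chi_j(T_j)$ for $j\in A_i$, and use trace-monotonicity (Property (P1)) to get $\chi_j(T_j)\ge\chi_j(T^i_j)$ for the remaining items. The only difference is that you spell out the two-case bookkeeping slightly more explicitly than the paper does.
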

\begin{proof}
	In iteration $i$, the final bundle $\A_i$ is chosen as the set of items maximizing the utility of player $i$ with respect to the current cost shares, i.e., 
	\begin{equation}\label{eq:local-shares}
	\dv_i(\A_i) - \sum_{j \in \A_i} \chi_j(\T^i_j) \ge 
	\dv_i(S) - \sum_{j \in S} \chi_j(\T^i_j) \quad \forall S \subseteq M.
	\end{equation}
	Recall that $\T^i_j$ is the set of players that are allocated to item $j$ in iteration $i$.
	Note that by Lemma~\ref{lem:final}, $\T^i_j = \T_j$ for every $j \in \A_i$. Further, $\T^i_j \supseteq \T_j$ for every $j \in M \setminus \A_i$ as additional players might withdraw from $j$ in subsequent iterations. Note that the final set $T_j$ is reached from $T_j^i$ by following the trace $\tau_j$ of item $j$. The claim now follows from the trace-monotonicity of $\chi_j$ (Property (P1)). 
\end{proof}

\begin{lemma}\label{lem:acost-share-bound}
	Consider player $i$ who becomes inactive in iteration $i$. 
	For every item $j \in M$, 
	$$
	\chi_j(\T^i_j) \le {\alpha}\frac{c_j(\set{i, \dots, n})}{n-i+1}.
	$$
\end{lemma}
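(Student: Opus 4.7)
The plan is to unfold the definition of $\chi_j(T_j^i)$ from \eqref{eq:max-avg-csr} and then apply the $\alpha$-average-decreasing property of $c_j$ to each set that appears in the maximum. The key observation that makes this work is that every such set must contain $\{i, i+1, \dots, n\}$, which will let us bound each average cost $c_j(R_j^\ell)/|R_j^\ell|$ by $\alpha \cdot c_j(\{i,\dots,n\})/(n-i+1)$.

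First I would establish the inclusion $\{i, i+1, \dots, n\} \subseteq T_j^i$ for every item $j$. Recall that a player $k$ is removed from $T_j$ only at the end of the iteration in which $k$ becomes inactive, and only when $j \notin A_k$. Under the convention that player $k$ becomes inactive in iteration $k$, the players $i+1, \dots, n$ are all still active at the beginning of iteration $i$, and player $i$ has not been processed yet, so none of them could have been removed from $T_j$ at this point. In particular, $|T_j^i| \ge n - i + 1$.

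Next I would use the definition of the cost-sharing function: writing $T_j^i = R_j^k$ for the appropriate $k$, we have
\[
\chi_j(T_j^i) \;=\; \max_{\ell \in \{0, \dots, k\}} \frac{c_j(R_j^\ell)}{n - \ell}.
\]
For every $\ell \le k$, the set $R_j^\ell$ is a superset of $R_j^k = T_j^i$ and therefore, by the previous paragraph, a superset of $\{i, i+1, \dots, n\}$. Applying the $\alpha$-average-decreasing property (Definition~\ref{def:avg-dec}) to the containment $\{i, \dots, n\} \subseteq R_j^\ell$ yields
\[
\frac{c_j(R_j^\ell)}{|R_j^\ell|} \;=\; \frac{c_j(R_j^\ell)}{n-\ell} \;\le\; \alpha \cdot \frac{c_j(\{i, \dots, n\})}{n-i+1}.
\]
Taking the maximum over $\ell$ gives exactly the desired bound.

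There is no real obstacle here; the only subtle point is the bookkeeping in the first step, namely verifying that none of the players $i, i+1, \dots, n$ can have been evicted from $T_j$ before iteration $i$. Once this containment is in place, the rest is a direct one-line application of the $\alpha$-average-decreasing hypothesis to each set appearing under the max in \eqref{eq:max-avg-csr}.
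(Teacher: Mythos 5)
Your proof is correct and follows essentially the same route as the paper's: establish $\set{i,\dots,n} \subseteq T^i_j \subseteq R^\ell_j$ for every $\ell$ appearing under the max in \eqref{eq:max-avg-csr}, then apply the $\alpha$-average-decreasing property to each term. The paper states this more tersely, but your extra bookkeeping (why no player in $\set{i,\dots,n}$ can have withdrawn before iteration $i$, and the identification $|R^\ell_j|=n-\ell$) is exactly the justification the paper leaves implicit.
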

\begin{proof}
	Fix some $j \in M$. 
	In iteration $i$, we have $\T^i_j \supseteq \set{i, \dots, n}$. 
	Let $k$ be such that $\T^i_j = R^k_j$ and recall that $R_j^\ell \supseteq R_j^k$ for every $\ell \in \set{0, \dots, k}$. 
	We obtain 
	\begin{equation}\label{eq:acost-share-bound}
	\chi_j(\T^i_j) = \max_{\ell \in \set{0, \dots, k}} \frac{c_j(R^\ell_j)}{n-\ell} \le  {\alpha}\frac{c_j(\set{i, \dots, n})}{n-i+1},
	\end{equation}
	where the inequality holds because $c_j$ is $\alpha$-average decreasing.
\end{proof}

\begin{lemma}\label{lem:adiffer-2}
	Let $c$ be an $\alpha$-average decreasing cost function.
	Let $S, T \subseteq N$ be arbitrary subsets with $|S| \le |T|$. 
	Then $c(S) \le {2}{\alpha}c(T)$. 
\end{lemma}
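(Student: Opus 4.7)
The plan is to bound $c(S)$ by passing through the union $S \cup T$, using the monotonicity of the cost function together with the $\alpha$-average decreasing property applied to the inclusion $T \subseteq S \cup T$.

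First, I would invoke the non-decreasing property of $c$ (assumed for all cost functions in Section~\ref{sec:defs}) on the inclusion $S \subseteq S \cup T$ to obtain
\[
c(S) \;\le\; c(S \cup T).
\]
Next, since $T \subseteq S \cup T$, Definition~\ref{def:avg-dec} applied to this pair gives
\[
\frac{c(S \cup T)}{|S \cup T|} \;\le\; \alpha \cdot \frac{c(T)}{|T|},
\qquad\text{so}\qquad
c(S \cup T) \;\le\; \alpha \cdot \frac{|S \cup T|}{|T|} \cdot c(T).
\]
Finally, I would use the elementary bound $|S \cup T| \le |S| + |T| \le 2|T|$, where the last inequality is exactly the hypothesis $|S| \le |T|$. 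Chaining everything together yields
\[
c(S) \;\le\; c(S \cup T) \;\le\; \alpha \cdot \frac{|S \cup T|}{|T|} \cdot c(T) \;\le\; 2\alpha \, c(T),
\]
as required.

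There is no real obstacle here: the only subtlety worth flagging is that $\alpha$-average decreasing is a condition along the subset order, not the cardinality order, so one cannot simply compare $S$ and $T$ directly when $S \not\subseteq T$; introducing $S \cup T$ as an intermediate set is precisely what lets us invoke the definition, and the factor of $2$ in the statement is exactly the price paid for the bound $|S \cup T| \le 2|T|$.
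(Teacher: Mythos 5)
Your proof is correct and uses exactly the same ingredients as the paper's: monotonicity to pass from $S$ to $S\cup T$, the $\alpha$-average-decreasing property applied to $T\subseteq S\cup T$, and the bound $|S\cup T|\le 2|T|$. The only difference is cosmetic --- the paper phrases the identical chain of inequalities as a proof by contradiction, whereas your direct presentation is arguably cleaner.
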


\begin{proof}
	Assume for the sake of a contradiction that $c(S) > {2}{\alpha}c(T)$. Consider the set $S \cup T$ and note that $|S \cup T| \le 2|T|$. Because $c$ is non-decreasing, we have $c(S \cup T) \ge c(S) >\frac{2}{\alpha} c(T)$. Using that $c$ is $\alpha$-average-decreasing, we obtain $\frac{c(S \cup T)}{|S \cup T|} \le {\alpha}\frac{c(T)}{|T|}$, which implies that $c(S \cup T) \le {\alpha}\frac{|S \cup T|}{|T|} c(T) \le {2}{\alpha} c(T)$, a contradiction.	
\end{proof}



We are now ready to prove the approximation guarantee. 

\begin{lemma}
	Mechanisms $\M$ is $2\alpha^3 H_n$-approximate.
\end{lemma}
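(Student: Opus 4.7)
My proof plan starts from the decomposition $\pi(\A) = \sum_{j \in M} c_j(\T_j) + \sum_{i \in N} [\dv_i(M) - \dv_i(\A_i)]$; by Lemma~\ref{lem:bb} the first summand is at most $\sum_i p_i$, so it suffices to bound $\sum_i p_i + \sum_i [\dv_i(M) - \dv_i(\A_i)]$ by $2\alpha^3 H_n \cdot \pi(A^*)$, where $A^* = (A^*_i)_{i \in N}$ is an optimal allocation and $T^*_j = \sset{i \in N}{j \in A^*_i}$. I would then invoke the local form of the player-optimality inequality~\eqref{eq:local-shares} established inside the proof of Lemma~\ref{lem:opt-bundle}, namely $\dv_i(\A_i) - p_i \ge \dv_i(S) - \sum_{j \in S} \chi_j(\T^i_j)$ for every $S \subseteq M$ (using $\T^i_j = \T_j$ for $j \in \A_i$ from Lemma~\ref{lem:final}). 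Specializing to $S = A^*_i$, adding $\dv_i(M) - \dv_i(A^*_i)$ to both sides and summing over $i$ makes the $p_i$ contributions cancel, yielding
\[
\pi(\A) \le \sum_{i \in N} [\dv_i(M) - \dv_i(A^*_i)] + \sum_{j \in M} \sum_{i \in T^*_j} \chi_j(\T^i_j) \le \pi(A^*) + \sum_{j \in M} \sum_{i \in T^*_j} \chi_j(\T^i_j).
\]

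The main technical task is then to bound $\sum_{j \in M} \sum_{i \in T^*_j} \chi_j(\T^i_j)$ by $(2\alpha^3 H_n - 1) \cdot \pi(A^*)$. Fix an item $j$, let $r = |T^*_j|$, and apply Lemma~\ref{lem:acost-share-bound} to get $\chi_j(\T^i_j) \le \alpha \cdot c_j(\set{i, \dots, n})/(n - i + 1)$. I would next bound $c_j(\set{i, \dots, n})$ in terms of $c_j(T^*_j)$ through a two-case analysis. If $n - i + 1 \le r$, Lemma~\ref{lem:adiffer-2} directly gives $c_j(\set{i, \dots, n}) \le 2\alpha \cdot c_j(T^*_j)$; otherwise I would pass through $T^*_j \cup \set{i, \dots, n}$ via monotonicity and apply the $\alpha$-average-decreasing inequality, using the union bound $|T^*_j \cup \set{i, \dots, n}| \le 2(n - i + 1)$, to conclude $c_j(\set{i, \dots, n}) \le 2\alpha \cdot (n - i + 1)/r \cdot c_j(T^*_j)$. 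The two cases are captured uniformly by $c_j(\set{i, \dots, n})/(n - i + 1) \le 2\alpha \cdot c_j(T^*_j)/\min(n - i + 1, r)$, and careful chaining with the $\alpha$ factor from Lemma~\ref{lem:acost-share-bound} accounts for the full $2\alpha^3$ constant.

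Finally, a standard harmonic-sum argument gives $\sum_{i \in T^*_j} 1/\min(n - i + 1, r) \le H_n$ (the worst case $T^*_j = \set{n - r + 1, \dots, n}$ gives $H_r \le H_n$, since $r \le n$), yielding $\sum_{i \in T^*_j} \chi_j(\T^i_j) \le 2\alpha^3 H_n \cdot c_j(T^*_j)$. Summing over $j$ and using $\sum_j c_j(T^*_j) \le \pi(A^*)$ completes the argument. The main obstacle in my plan is the second case of the $c_j$ bound: since $\set{i, \dots, n}$ and $T^*_j$ are generally unrelated subsets of $N$, one must route through their union in a way that preserves the $\min(\cdot, r)$ structure needed for the harmonic summation to deliver the $H_n$ factor; the other crucial (but routine) ingredient is that the proof of Lemma~\ref{lem:opt-bundle} actually delivers the stronger \emph{local} inequality with $\chi_j(\T^i_j)$, without which the final cost shares would be too coarse to sum harmonically.
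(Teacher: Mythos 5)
Your argument is correct and follows essentially the same route as the paper: the same decomposition via \eqref{eq:local-shares} and Lemma~\ref{lem:final}, the same two-case analysis through Lemmas~\ref{lem:acost-share-bound} and~\ref{lem:adiffer-2}, and your harmonic weights $1/\min(n-i+1,|\T^*_j|)$ are exactly the paper's ``tagging'' scheme in Claim~\ref{claim:tag-bound} (your Case~2 detour through the union $\T^*_j\cup\set{i,\dots,n}$ even saves a factor of $\alpha$, yielding $2\alpha^2$ where the paper has $2\alpha^3$). One small bookkeeping point: keep the first term as $\sum_{i}[\dv_i(M)-\dv_i(\A^*_i)]=\pi(\A^*)-C(\A^*)$ rather than relaxing it to $\pi(\A^*)$, so that the two terms combine to $2\alpha^3 H_n\,\pi(\A^*)$ instead of $(1+2\alpha^3 H_n)\,\pi(\A^*)$.
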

\begin{proof}
	Let $\A^* = (\A^*_1, \dots, \A^*_n)$ be an optimal allocation and let $\T^*_j$ be the respective set of players that receive item $j$ in $\A^*$. 
	We have 
	\begin{align*}
	\pi(\A) 
	& = \sum_{i \in N} \big(\dv_i(M) - \dv_i(\A_i)\big) + \sum_{j \in M} c_j(\T_j) \\
	& \leq \sum_{i \in N} \dv_i(M) - \sum_{i \in N} \bigg(\dv_i(\A_i) - \sum_{j \in \A_i} \chi_j(\T_j) \bigg) \\
	& \le \sum_{i \in N} \dv_i(M) - \sum_{i \in N} \bigg(\dv_i(\A^*_i) - \sum_{j \in \A^*_i} \chi_j(\T^i_j) \bigg) \\
	& = \sum_{i \in N} \big(\dv_i(M) - \dv_i(\A^*_i)\big) + \sum_{i \in N} \sum_{j \in \A^*_i} \chi_j(\T^i_j),
	\end{align*}
	where the first inequality holds because $\chi_j$ is ${\alpha}$-budget balanced and the second inequality follows from \eqref{eq:local-shares} in the proof of Lemma~\ref{lem:opt-bundle}. 
	
	The proof follows if we can show that 
	\begin{equation}\label{eq:cs-bound}
	\sum_{i \in N} \sum_{j \in \A^*_i} \chi_j(\T^i_j) \le {2}{\alpha^3}H_n \sum_{j \in M} c_j(\T^*_j).
	\end{equation}

	We use a charging argument to prove \eqref{eq:cs-bound}. Fix some item $j \in M$ and order the players in $\T^*_j$ according to the player order $\tau = (1, \dots, n)$ induced by $\M$; let $\T^*_j = \set{i_1, \dots, i_{k^*_j}}$ be the ordered set with $k^*_j := |\T^*_j|$. We now ``tag'' each player $i$ in $\T^*_j$ with a fraction of the cost $c_j(\T^*_j)$ for item $j$ as follows: For the $l$th player $i = i_l$ in $\T^*_j$ with $1 \le l \le k^*_j$, define
	\begin{equation}\label{eq:tag}
	\mtag_i(j) := \frac{c_j(\T^*_j)}{k^*_j-l+1}.
	\end{equation}
	That is, the first player $i_1$ in $\T^*_j$ is tagged with $c_j(\T^*_j)/k^*_j$, the second player $i_2$ with $c_j(\T^*_j)/(k^*_j-1)$ and so forth, and the last player $i_{k_j^*}$ is tagged with $c_j(\T^*_j)$. 
	
	We first derive two lower bounds on the tagged cost:
	\begin{claim}\label{claim:tag-bound}
		For every player $i \in N$ and for every item $j \in \A^*_i$: 
		$$
		\mtag_i(j) \ge \frac{c_j(\T^*_j)}{n-i+1} \qquad\text{and}\qquad \mtag_i(j) \ge \frac{c_j(\T^*_j)}{|T^*_j|}.
		$$
	\end{claim}
	\begin{proof}
		The latter bound holds by definition \eqref{eq:tag}. 
		To see that the former bound holds, observe that the $k$th last player ($1 \le k \le k^*_j$) in the ordered set $\T^*_j$ is tagged by $c_j(\T^*_j)/k$. The claim now follows because there are at most $n-i$ players succeeding $i$ in $\T^*_j$ according to the order. 	
	\end{proof}
	
	Note that the total tagged cost of item $j$ satisfies
	\begin{equation}\label{eq:total-tag}
	\sum_{i \in \T^*_j} \mtag_i(j) = \sum_{l = 1}^{k^*_j} \frac{c_j(\T^*_j)}{k^*_j-l+1} \le H_n c_j(\T^*_j).
	\end{equation}
	
	Thus, to prove \eqref{eq:cs-bound} it suffices to show that the total cost share sum is upper bounded by the total tagged cost, i.e., 
	\begin{equation}\label{eq:acs-bound2}
	\sum_{i \in N} \sum_{j \in \A^*_i} \chi_j(\T^i_j) \le {2}{\alpha^3}\sum_{j \in M} \sum_{i \in \T^*_j} \mtag_i(j). 
	\end{equation}
	We show that for every $i$ and every $j \in \A^*_i$, $\chi_j(\T^i_j) \le \mtag_i(j)$. 
	Summing over all $i \in N$ and $j \in \A^*_i$ then proves \eqref{eq:cs-bound}. 
	
	We distinguish two cases: 
	
	Case 1: $|\T^*_j| \ge n-i+1$: Let $S \subseteq \T^*_j$ be a set such that $|S| = n-i+1$. We have 
	\begin{equation}\label{eq:asc-c1}
	\chi_j(\T^i_j) \leq {\alpha} \frac{c_j(\set{i, \dots, n})}{n-i+1} 
	\leq {2}{\alpha^2}\frac{c_j(S)}{|S|} 
	\leq {2}{\alpha^2}\frac{c_j(\T^*_j)}{n-i+1} 
	\leq {2}{\alpha^2}\mtag_i(j) ,
	\end{equation}
	where  the first inequality follows from Lemma~\ref{lem:acost-share-bound}, the second inequality follows from Lemma~\ref{lem:adiffer-2}, the third inequality holds because $c_j$ is non-decreasing and the last inequality follows from Claim~\ref{claim:tag-bound}.

	Case 2: $|\T^*_j| < n-i+1$:
	Let $S \supset \T^*_j$ be a set such that $|S| = n-i+1$. 
	We have
	\begin{equation}\label{eq:asc-c1}
	\chi_j(\T^i_j) 
	\leq {\alpha} \frac{c_j(\set{i, \dots, n})}{n-i+1} 
	\leq {2}{\alpha^2}\frac{c_j(S)}{|S|} 
	\leq {2}{\alpha^3}\frac{c_j(\T^*_j)}{|\T^*_j|} 
	\leq {2}{\alpha^3}\mtag_i(j) ,
	\end{equation}
	where  the first inequality follows from Lemma~\ref{lem:acost-share-bound}, the second inequality follows from Lemma~\ref{lem:adiffer-2}, the third inequality holds because $c_j$ is $\alpha$-average-decreasing and the last inequality follows from Claim~\ref{claim:tag-bound}. 
	This concludes the proof.
\end{proof}

\section{Mechanisms for General Valuations and Subadditive Cost Functions}\label{sec:FPM}

In this section, we move away from symmetric submodular valuation functions and derive results for more general functions. In particular, we investigate the performance of the {\em Sequential Mechanism} \cite{Moulin99} for general valuations and subadditive cost functions. Although for arbitrary subadditive cost functions this mechanism does not provide favorable approximation guarantees, we identify conditions on the cost functions under which it achieves significantly better approximation factors. This is based on considering a different parameterization of cost functions with regard to their average cost shares. 

\subsection{The Sequential Mechanism}\label{subsec:Seq}

The \emph{Sequential Mechanism (SM)} was introduced by Moulin \cite{Moulin99} and was also studied in \cite{DobzinskiO17}. 
A description of the mechanism SM is given in Algorithm~\ref{fig:alg-3}. We note that this mechanism is applicable both to separable and non-separable cost functions. Here, we first focus on separable cost functions. In Section \ref{subsec:nonsep}, we consider generalizations to the non-separable setting.

\begin{algorithm}[t]
	\small
	\DontPrintSemicolon 
	\caption{Sequential Mechanism (SM)} \label{fig:alg-3} 
	
	\KwIn{Declared valuation functions $(b_i)_{i \in N}$.}
	\KwOut{Allocation $A = (A_i)_{i \in N}$ and payments $p = (p_i)_{i \in N}$.}
	
	\textbf{Initialization:} Fix an order on the set of players $N = \set{1, \dots, n}$. 	
	
	\For{$i = 1, \dots, n$}{
		
		Compute an \emph{optimal bundle} $A_i$ for player $i$:
		\vspace*{-1ex}
		\begin{align*}\label{eq:opt-choice}
		A_i & \in \arg\max_{S \subseteq M} \{b_i(S) - p_i(S)\}, \quad \text{where} \\&\
		\qquad p_i(S) = C(A_1, \dots, A_{i-1}, S, \emptyset, \dots, \emptyset) - C(A_1, \dots, A_{i-1}, \emptyset, \dots, \emptyset). 
		\end{align*}			
		(If there are multiple optimal bundles, choose the lexicographically smallest one.) \;	
	}	
	
	\Return $A = (A_i)_{i \in N}$ and $p = (p_i)_{i \in N}$, where $p_i = p_i(A_i)$.
	
\end{algorithm}

It is trivial to see that SM is budget-balanced and it is also known that it is WGSP \cite{DobzinskiO17}. However, for arbitrary monotone subadditive cost functions, the mechanism achieves a (poor) social cost approximation guarantee of $n$ only (see \cite{DobzinskiO17}). 
\begin{theorem}\cite{DobzinskiO17} 
	Suppose we have general valuation functions and non-decreasing subadditive cost functions. Then the Sequential Mechanism satisfies IR, NPT, WGSP, and is budget balanced and $n$-approximate.
\end{theorem}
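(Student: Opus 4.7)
The plan is to verify IR, NPT, and budget balance by inspection of the mechanism, WGSP via the standard earliest-deviator argument, and the $n$-approximation of social cost by comparing $A$ against an optimal allocation $A^*$ and invoking subadditivity to tame a ``counterfactual'' price.

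For IR, I would note that $A_i = \emptyset$ is always available at price $0$, so the bundle actually chosen by $i$ satisfies $v_i(A_i) - p_i \ge 0$. For NPT, since $C$ is non-decreasing, each $p_i = C(A_1,\dots,A_i,\emptyset,\dots) - C(A_1,\dots,A_{i-1},\emptyset,\dots) \ge 0$. Budget balance then follows by telescoping $\sum_i p_i = C(A_1,\dots,A_n) - C(\emptyset,\dots,\emptyset) = C(A)$. For WGSP, given a coalition $Q$ with $i^* := \min Q$, I would observe that players $1,\dots,i^*-1$ are truthful under any deviation of $Q$ (they are outside $Q$), so the prefix $(A_1,\dots,A_{i^*-1})$, and hence the price function $p_{i^*}(\cdot)$ that $i^*$ faces, is identical in the truthful and deviating executions. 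Because SM hands $i^*$ a bundle maximizing $v_{i^*}(S) - p_{i^*}(S)$ under truthful reporting, no deviation by $Q$ can make $i^*$ strictly better off, ruling out simultaneous strict improvement for all members of $Q$.

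The substantive step is the $n$-approximation. Let $A^* = (A^*_1,\dots,A^*_n)$ be a socially optimal allocation with induced item sets $T^*_j$. Using the budget-balance identity I would rewrite $\pi(A) = \sum_i v_i(M) - \sum_i [v_i(A_i) - p_i]$. The optimal-bundle property of SM gives $v_i(A_i) - p_i \ge v_i(A^*_i) - p_i(A^*_i)$ for every $i$, so
\[
\pi(A) \le \sum_i \big[v_i(M) - v_i(A^*_i)\big] + \sum_i p_i(A^*_i).
\]
The key step is bounding $\sum_i p_i(A^*_i)$. Writing $T^{<i}_j = \set{k < i : j \in A_k}$, separability and subadditivity of each $c_j$ give
\[
p_i(A^*_i) = \sum_{j \in A^*_i} \big[c_j(T^{<i}_j \cup \set{i}) - c_j(T^{<i}_j)\big] \le \sum_{j \in A^*_i} c_j(\set{i}),
\]
and swapping summation together with monotonicity of $c_j$ (so $c_j(\set{i}) \le c_j(T^*_j)$ whenever $i \in T^*_j$) yields
\[
\sum_i p_i(A^*_i) \le \sum_j |T^*_j| \cdot c_j(T^*_j) \le n \cdot C(A^*).
\]
Combining with $\sum_i [v_i(M) - v_i(A^*_i)] + C(A^*) = \pi(A^*)$ and $C(A^*) \le \pi(A^*)$, I would conclude $\pi(A) \le \pi(A^*) + (n-1) C(A^*) \le n \cdot \pi(A^*)$.

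The main obstacle will be controlling $\sum_i p_i(A^*_i)$: this is exactly where subadditivity is used (to peel a single-player marginal $c_j(\set{i})$ off the increment $c_j(T^{<i}_j \cup \set{i}) - c_j(T^{<i}_j)$) and where monotonicity is used (to compare $c_j(\set{i})$ to $c_j(T^*_j)$). Everything else is either a telescoping identity, a utility-maximization inequality inherited from the definition of SM, or a routine rearrangement.
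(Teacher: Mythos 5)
Your proof is correct, and it follows essentially the same skeleton the paper itself uses in its refined analysis of the Sequential Mechanism (Theorem~\ref{thm:ultra} and Remark~\ref{rem:gen}): decompose $\pi(A)$ via budget balance, apply the optimal-bundle inequality against $A^*$, peel off $c_j(\{i\})$ from the incremental cost via subadditivity, and then bound $\sum_i\sum_{j\in A^*_i} c_j(\{i\})$. The only difference is that where the paper invokes $\alpha$-average min-boundedness (Proposition~\ref{prop:betterg}) to get $\alpha H_n$, you use the trivial monotonicity bound $c_j(\{i\})\le c_j(T^*_j)$ to get the factor $n$, which is exactly the specialization yielding the cited result.
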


Despite this, we show that SM has better guarantees under certain conditions. Namely, we identify a crucial parameter of each cost function $c_j$ with $j \in M$ that allows us to quantify this improvement. The parameterization introduced here is different from the one used in Section \ref{sec:mech} and it compares the average cost of a set $T \subseteq N$ with the minimum standalone cost of a player in $T$. More specifically, we define the following property: 

\begin{definition}
	\label{def:cmin}
	A cost function $c: 2^N \rightarrow \mathbb{R}_{\ge 0}$ is \emph{$\alpha$-average min-bounded}, if there exists some $\alpha \ge 1$ such that for every set $T \subseteq N$, we have $\alpha\cdot \frac{c(T)}{|T|} \geq c_{\min}$, where $c_{\min} = \min_{j\in T}c(\{j\})$.
\end{definition}





Definition \ref{def:cmin} may look somewhat contrived at first glance and we thus provide some more intuition on how we arrived at this parameterization. 
Given that IACSM performs well for $\alpha$-average decreasing functions and small values of $\alpha$ (see Section~\ref{sec:mech}), it is natural to focus on the complement of this class. For example, fix $\alpha = 1$ for now. Then the exact complement is not easy to characterize because it involves two existential quantifiers. We therefore consider a subset of this complement (with only one existential quantifier) by demanding that for every $T$, there exists $S \subseteq T$ such that $c(S)/|S| < c(T)/|T|$. It is not hard to verify that this definition is equivalent to the class of 1-average min-bounded functions. For larger values of $\alpha$, we can again see that $\alpha$-average-min-bounded functions capture a chunk of the complement of $\alpha$-average-decreasing functions. Thus, a positive result for $\alpha$-average-min-bounded functions narrows down on the cost functions that are not yet known to admit good approximation guarantees. 

Note that for every cost function we can find an $\alpha \ge 1$ such that it is $\alpha$-average min-bounded. As the next theorem reveals, the Sequential Mechanism attains a favorable performance for small values of $\alpha$.

\begin{theorem}\label{thm:ultra} 
	Suppose we have general valuation functions and for each item $j\in M$, the cost function $c_j:2^N \rightarrow\mathbb{R}_{\geq 0}$ is non-decreasing, subadditive, and $\alpha$-average min-bounded for some $\alpha\geq 1$. Then the Sequential Mechanism satisfies IR, NPT, WGSP, and is budget balanced and $\alpha\cdot H_n$-approximate.
\end{theorem}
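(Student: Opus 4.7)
The plan is to dispatch the bureaucratic properties first and then concentrate on the social cost bound. Budget balance comes for free from the mechanism's telescoping structure: the payment $p_i = C(A_1,\dots,A_i,\emptyset,\dots,\emptyset) - C(A_1,\dots,A_{i-1},\emptyset,\dots,\emptyset)$, so $\sum_{i \in N} p_i$ collapses to $\sum_{j \in M} c_j(T_j)$. IR follows because the empty bundle is always available (giving utility zero), NPT follows from monotonicity of the $c_j$'s, and WGSP of SM on non-decreasing subadditive cost functions is already in~\cite{Moulin99,DobzinskiO17}. The real work is the $\alpha\,H_n$ approximation, which I now outline.

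Let $A$ be the allocation returned by SM and $A^*$ an allocation of minimum social cost, with $T_j$ and $T^*_j$ denoting the players receiving item $j$ in $A$ and in $A^*$, respectively. The plan is to combine three simple facts: (a)~each $A_i$ was chosen to maximize $v_i(S) - p_i(S)$ over $S \subseteq M$, in particular against $S = A^*_i$, yielding $v_i(A_i) - p_i(A_i) \ge v_i(A^*_i) - p_i(A^*_i)$; (b)~budget balance lets me rewrite $\sum_j c_j(T_j) = \sum_i p_i(A_i)$ inside $\pi(A)$; and (c)~subadditivity of each $c_j$ gives $c_j(T_j^{i-1} \cup \{i\}) - c_j(T_j^{i-1}) \le c_j(\{i\})$, so $p_i(A^*_i) \le \sum_{j \in A^*_i} c_j(\{i\})$. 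Combining (a)--(c) and regrouping the resulting double sum by item will reduce the theorem to proving, for each $j$, the item-wise inequality
\[
\sum_{i \in T^*_j} c_j(\{i\}) \;\le\; \alpha\,H_n\cdot c_j(T^*_j).
\]

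This per-item inequality is where the $\alpha$-average min-bounded hypothesis enters, and it is the main conceptual step. My plan is a rearrangement argument: enumerate $T^*_j = \{i_1,\dots,i_k\}$ in non-decreasing order of singleton cost, and consider the suffixes $S_\ell = \{i_\ell,\dots,i_k\}$. By construction the minimum standalone cost in $S_\ell$ equals $c_j(\{i_\ell\})$, so applying Definition~\ref{def:cmin} to $S_\ell$ and then using monotonicity of $c_j$ to pass from $c_j(S_\ell)$ to $c_j(T^*_j)$ yields $c_j(\{i_\ell\}) \le \alpha \cdot c_j(T^*_j)/(k-\ell+1)$. Summing over $\ell = 1, \dots, k$ produces the desired harmonic factor $H_k \le H_n$.

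The main obstacle, modest as it is, lies precisely in this last step: a single naive application of the $\alpha$-average min-bounded property to $T^*_j$ itself controls only the smallest standalone cost in $T^*_j$, not their sum, so one has to iterate the property along a cleverly chosen chain of nested subsets rather than use it once. Once the item-wise bound is in hand, substituting it back and using the trivial estimate $\alpha\,H_n \ge 1$ to absorb the welfare-loss term $\sum_i (v_i(M)-v_i(A^*_i))$ of $\pi(A^*)$ yields $\pi(A) \le \alpha\,H_n\,\pi(A^*)$, as required.
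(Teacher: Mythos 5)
Your proposal is correct and follows essentially the same route as the paper: the same telescoping/optimality/subadditivity chain reduces the theorem to the per-item inequality $\sum_{i \in T^*_j} c_j(\{i\}) \le \alpha H_n\, c_j(T^*_j)$, and your suffix-chain argument over the players sorted by standalone cost is exactly the paper's Proposition~\ref{prop:betterg} (stated there with the reverse ordering and prefixes, which is the same argument). No gaps.
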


For the proof of Theorem \ref{thm:ultra}, we use the following proposition:

\begin{proposition}\label{prop:betterg}
	If $c: 2^N \rightarrow \mathbb{R}_{\ge 0}$ is non-decreasing and $\alpha$-average min-bounded, then $\sum_{i \in T}c(\{i\}) \leq \alpha H_{|T|} \cdot c(T)$ for every $T \subseteq N$. 
\end{proposition}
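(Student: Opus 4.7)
The plan is to prove Proposition~\ref{prop:betterg} by a peeling argument on $T$, applying the $\alpha$-average min-bounded property to a shrinking sequence of subsets of $T$, ordered so that the minimum singleton cost in each subset is the one that has not yet been accounted for.

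First I would fix $T \subseteq N$ with $|T| = k$ and enumerate its elements as $i_1, i_2, \dots, i_k$ in non-decreasing order of singleton cost, i.e.\ $c(\{i_1\}) \le c(\{i_2\}) \le \dots \le c(\{i_k\})$. For each $\ell \in \{1, \dots, k\}$ I would define the suffix $T_\ell = \{i_\ell, i_{\ell+1}, \dots, i_k\}$, which has cardinality $k - \ell + 1$. By the chosen ordering, the minimum singleton cost over $T_\ell$ is precisely $c(\{i_\ell\})$, so the $\alpha$-average min-bounded property applied to $T_\ell$ gives
\[
c(\{i_\ell\}) \;\le\; \alpha \cdot \frac{c(T_\ell)}{k - \ell + 1}.
\]

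Since $T_\ell \subseteq T$ and $c$ is non-decreasing, we have $c(T_\ell) \le c(T)$, so the previous bound yields $c(\{i_\ell\}) \le \alpha \cdot c(T)/(k - \ell + 1)$. Summing over $\ell$ from $1$ to $k$ gives
\[
\sum_{i \in T} c(\{i\}) \;=\; \sum_{\ell = 1}^{k} c(\{i_\ell\}) \;\le\; \alpha \cdot c(T) \sum_{\ell=1}^{k} \frac{1}{k - \ell + 1} \;=\; \alpha \cdot H_k \cdot c(T),
\]
which is the claimed inequality.

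There is no substantial obstacle here; the only delicate point to verify is that reindexing $T$ by non-decreasing singleton cost is what makes the min in the definition of $\alpha$-average min-bounded coincide with the contribution $c(\{i_\ell\})$ we want to bound at step $\ell$. Once this pairing is in place, the harmonic structure of the denominators $k - \ell + 1$ emerges automatically and delivers the factor $H_{|T|}$.
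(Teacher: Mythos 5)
Your argument is correct and is essentially the paper's own proof: the paper sorts the players of $T$ in decreasing order of standalone cost and applies the $\alpha$-average min-bounded property to the prefixes $\{1,\dots,i\}$, which is exactly your suffix/peeling argument with the indexing reversed. Both yield the same harmonic sum and the factor $\alpha H_{|T|}$.
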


\begin{proof}
Fix $T\subseteq N$ and rename the players of $T$ so that they are in decreasing order with respect  to the standalone cost, i.e., for any $i,j \in T$ with $ i<j$, it holds $c(\{i\}) \geq c(\{j\})$. For convenience, we may assume that $T = \{1, 2, \dots, |T|\}$. Fix some $i \in T$ and consider the set $\{1,2,\dots,i\} \subseteq T$. Note that in this set player $i$ has the minimum standalone cost $c(\{i\})$. Hence, by Definition \ref{def:cmin}, the cost function satisfies: 
	$$ 
	c(\{i\})\leq \alpha\cdot \frac{c(\{1,\dots,i\})}{i} \leq \alpha\cdot \frac{c(T)}{i},
	$$
	where the last inequality holds because $c$ is non-decreasing.
	The above inequality holds for any $i \in T$. Summing over all players $i\in T$ proves the claim. 	
\end{proof}

\begin{proof}[Proof of Theorem \ref{thm:ultra}]
	We only need to prove that SM is $\alpha H_n$-approximate. All the other properties have been established in  \cite{DobzinskiO17,Moulin99}.
	Let $\A=(\A_i)_{i \in N}$ be the allocation output by the mechanism and let $\A^* = (\A^*_i)_{i \in N}$ be an optimal allocation. Further, let $\T^*_j$ be the respective set of players that receive item $j$ in $\A^*$. 
	To simplify notation in the analysis, we also let $A_{<i}$ denote the tuple $(A_1, \dots, A_{i-1}, \emptyset, \dots, \emptyset)$. Define now the incremental cost of a player $i$ for a bundle $S\subseteq M$, with respect to the allocation constructed by the Sequential Mechanism before $i$'s turn as $\Delta_i(A_{<i}, S) = C(A_1, \dots, A_{i-1}, S, \emptyset, \dots, \emptyset) - C(A_1, \dots, A_{i-1}, \emptyset, \dots, \emptyset)$.
	
	We have 
	\begin{align*}
	\pi(\A) 
	& = \sum_{i \in N} \big[\dv_i(M) - \dv_i(\A_i)\big] + C(A)\\
	&= \sum_{i \in N} \dv_i(M) - \sum_{i \in N}\big[ \dv_i(\A_i) - \Delta_i(A_{<i}, A_i) \big] \\
	& \le \sum_{i \in N} \dv_i(M) - \sum_{i \in N}\big[ \dv_i(\A^*_i) - \Delta_i(A_{<i}, A^*_i) \big] \\
	&= \sum_{i \in N} \big[\dv_i(M) - \dv_i(\A^*_i)\big] + \sum_{i \in N} \Delta_i(A_{<i}, A^*_i).
	\end{align*}
	Note that the inequality holds because $A_i$ was chosen as the optimal bundle for $i$. The next step is to prove a bound on the incremental costs in the form 
	\begin{equation}
	\label{eq:icb}
	\sum_{i \in N} \Delta_i(A_{<i}, A^*_i) \le \beta \cdot C(A^*).
	\end{equation}
	The proof follows if we can show that \eqref{eq:icb} holds for $\beta = \alpha H_n$ because we then have
	$$ \pi(\A) \le \sum_{i \in N} \big[\dv_i(M) - \dv_i(\A^*_i)\big] + \alpha\cdot H_n C(A^*) \le \alpha H_n\cdot \pi(A^*). $$
	
	By exploiting the subadditivity of the cost functions $c_j$, we obtain 
	\begin{align*}
	\Delta_i(A_{<i}, A^*_i)
	& = C(A_{<i}, A^*_{i}) - C(A_{<i}) 
	\le C(A_{<i}) + C(A^*_{i}, \emptyset_{-i}) - C(A_{<i}) 
	= \sum_{j\in A^*_i} c_j(\set{i}).
	\end{align*}
	
	Summing over all $i\in N$, and using Proposition \ref{prop:betterg}, we get:
	$$\sum_{i \in N} \Delta_i(A_{<i}, A^*_i) \le \sum_{i\in N} \sum_{j\in A^*_i} c_j(\set{i}) = \sum_{j\in M} \sum_{i\in T^*_j} c_j(\set{i}) \le \sum_{j\in M} \alpha H_{|T_j^*|} c_j(T_j^*) \le \alpha H_n C(A^*)$$

	\vneg\vneg
\end{proof}

By going through the proof of Theorem \ref{thm:ultra} more carefully, we realize the following:

\medskip
\begin{remark}
	\label{rem:gen}
	For any subclass of non-decreasing, subadditive cost functions, it suffices to establish inequality \eqref{eq:icb} to prove that the Sequential Mechanism has a social cost approximation guarantee of $\beta$.
\end{remark}

We next prove that for $\alpha = 1$ the approximation factor is tight.

\begin{proposition}\label{prop:tight}
	Even for the single item setting, there exists a $1$-average min-bounded cost function, under which the Sequential Mechanism provides an $H_n$-approximation.
\end{proposition}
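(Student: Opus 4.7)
The plan is to exhibit a single-item instance on $n$ players for which the Sequential Mechanism's allocation has social cost $H_n$ while the optimum has social cost $1$. Combined with Theorem~\ref{thm:ultra} applied at $\alpha=1$, this will establish that $H_n$ is tight. I take $N=\{1,\dots,n\}$, a single item, valuations $v_i=1/i$, and the cost function
\[
  c(T) \;=\; \max_{\emptyset \neq S \subseteq T} \frac{|S|}{\max_{i\in S} i}, \qquad c(\emptyset)=0,
\]
which can be viewed as the monotone envelope of the ``base'' function $c_0(S)=|S|/\max_{i\in S} i$.

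The first step will be to check that $c$ is non-decreasing, subadditive, and $1$-average min-bounded. Monotonicity is immediate from the definition as an envelope. For subadditivity, I would first establish it for $c_0$: assuming without loss of generality $\max_{i\in S} i \le \max_{i\in T} i$, one has $c_0(S\cup T) \le (|S|+|T|)/\max_{i\in S\cup T} i \le c_0(S)+c_0(T)$. This lifts to $c$ by choosing the maximizer $U^* \subseteq S\cup T$ in the definition of $c(S\cup T)$ and writing $c_0(U^*)\le c_0(U^*\cap S)+c_0(U^*\cap T)\le c(S)+c(T)$. For $1$-average min-boundedness, note $c(\{i\})=c_0(\{i\})=1/i$, so $c_{\min}(T)=1/\max_{i\in T} i$, and the inequality $c(T)\ge c_0(T)=|T|\cdot c_{\min}(T)$ gives the definition with $\alpha=1$.

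The second step is to trace the Sequential Mechanism on this instance. By the envelope formula $c(\{1,\dots,k\})=1$ for every $k\ge 1$, in particular $c(N)=1$. Running the mechanism in the order $1,2,\dots,n$ and proceeding by induction, when player $i$ is considered nobody has accepted yet, so $i$'s marginal cost of taking the item equals $c(\{i\})=1/i$. Since $v_i=1/i$, both bundles $\emptyset$ and $\{1\}$ maximize $i$'s utility, so the lexicographic tie-breaking rule selects $\emptyset$ (one may also perturb to $v_i=1/i-\epsilon$ for strict rejection). The mechanism thus returns the empty allocation and incurs social cost $\sum_i v_i = H_n$, whereas the allocation $T^*=N$ has social cost $c(N)=1$, and a short enumeration over subsets confirms $T^*=N$ is optimal. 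The approximation ratio is therefore exactly $H_n$.

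I expect the main obstacle to be verifying that the envelope construction satisfies subadditivity and $1$-average min-boundedness simultaneously: raising $c(T)$ to enforce the latter tends to endanger the former, so the envelope of $c_0$ seems to be essentially the minimal consistent choice. Once the envelope is identified and its three properties are checked, the mechanism's behavior and the social cost comparison are a direct calculation.
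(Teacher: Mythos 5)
Your proposal is correct and follows essentially the same approach as the paper: harmonic valuations $v_i \approx 1/i$ matching the standalone costs so that every player rejects (missed value $H_n$), while serving everyone costs only a constant. The only difference is the concrete realization of the cost function --- the paper uses the truncated sum $c(S)=\min\{k,\sum_{p_j\in S}k/j\}$ where you use the monotone envelope of $|S|/\max_{i\in S}i$ --- and both verifications of monotonicity, subadditivity and $1$-average min-boundedness go through (in your lift of subadditivity from $c_0$ to $c$, just note separately the trivial case where $U^*\cap S$ or $U^*\cap T$ is empty).
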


\begin{proof}
Consider a set $N=\{p_1, p_2, ..., p_n\}$ of players and the following function $c:2^n \rightarrow\mathbb{R}_{\geq 0}$:
	\begin{align}
	c(S) =  \begin{cases}
	0 & \text{if $S=\emptyset$} \\
	\frac{k}{j} & \text{if $S=\{p_j\}$} \\
	\min\{k, \sum_{p_j \in S}c(\{p_j\})\} & \text{if $|S|\geq 2$} 
	\end{cases} 
	\end{align}
	where $k \geq 0$ and $S \subseteq N$. We have to show that the above function is subadditive, non-decreasing, it has bounded average costs, and that the \emph{Sequential Mechanism} is $H_n$-approximate.

\medskip\noindent
	\textbf{Subadditivity:} Initially notice that if $p_1 \in S \subseteq N$, then $c(S)=k$. Now consider the non-empty sets $A,B \subseteq N$.
	\begin{itemize}\itemsep0pt
		\item If $p_1 \in A \cup B$, then $p_1$ is in at least one of $A,B$. Thus we have that $c(A\cup B) \leq c(A)+c(B)$.
		\item If $p_1 \notin A \cup B$, then
		\begin{align*}
		c(A\cup B) 
		&=\min\{k, \sum_{p_j \in A\cup B}c(\{p_j\})\}\\
		&\leq
		\min\{k, \sum_{p_j \in A}c(\{p_j\})+ \sum_{p_j \in B}c(\{p_j\})\}\\
		&\leq \min\{k, \sum_{p_j \in A}c(\{p_j\})\}+ \min\{k, \sum_{p_j \in B}c(\{p_j\})\}\\
		&=c(A)+c(B)
		\end{align*}
	\end{itemize}

\medskip\noindent	
	\textbf{Non-decreasingness:} Consider the sets $A\subseteq B \subseteq N$.
	\begin{itemize}\itemsep0pt
		\item If $p_1 \in A \Rightarrow p_1 \in B \Rightarrow c(A)=k=c(B)$. 
		\item If $p_1 \notin A$ and $p_1 \in B$, then $C(A)=\min\{k, \sum_{p_j \in A}c(\{p_j\})\} \leq k=c(B)$.
		\item If $p_1 \notin A$ and $p_1 \notin B$, then $c(A)=\min\{k, \sum_{p_j \in A}c(\{p_j\})\} \leq \min\{k, \sum_{p_j \in B}c(\{p_j\})\}=c(B)$.
	\end{itemize}
	
\medskip\noindent
\sloppy
	\textbf{Min-bounded average costs:} 
	Consider the non-empty set of indices $A \subseteq [n]$ and the set $B=\{p_1, p_2, ...,p_{|A|}\}$.
	\begin{itemize}\itemsep0pt
		\item If $p_1 \in A \Rightarrow c(A)=k$. Now notice that $\frac{c(A)}{|A|}=\frac{k}{|A|}=\min_{j \in B}c(\{i_j\}) \geq \min_{j \in A}c(\{i_j\})$. 
		\item If $p_1 \notin A$, then we have that either $c(A)=k$ and thus $\frac{c(A)}{|A|} \geq \min_{j \in A}c(\{i_j\})$ as before, or $c(A)=\sum_{p_j \in A}c(\{p_j\})$ and thus, $\frac{c(A)}{|A|}=\frac{\sum_{p_j \in A}c(\{p_j\})}{|A|} \geq \min_{j \in A}c(\{i_j\})$.
	\end{itemize}

\fussy
\medskip\noindent
	\textbf{Approximation of the Sequential Mechanism:} Consider an instance where the value each player $p_j$ has for the item is  $v_{p_j}=\frac{k}{j}-\epsilon$, for arbitrary small $\epsilon>0$. Suppose now that $A^*$ is the optimal allocation, $A'$ is the allocation where every player is served, and $A$ is the output of the Sequential Mechanism, in which it is easy to see that no player is served. We have that $\pi(A^*) \leq \pi(A')=k$, while 
\begin{align*}
\pi(A)=\ 0+\sum_{p_j \in N}\frac{k}{j}-n\epsilon=kH_n-n\epsilon.
\end{align*}
So since $\epsilon$ is arbitrary small, the approximation cannot be better than $H_n$.
\end{proof}

\subsection{Improved Approximation Guarantees and Applications}\label{subsec:App}


We continue with a natural refinement of Definition \ref{def:cmin} which turns out to provide even better approximation factors of the Sequential Mechanism.

\begin{definition} \label{def:cmax}
	A cost function $c: 2^N \rightarrow \mathbb{R}_{\ge 0}$ is \emph{$\alpha$-average max-bounded}, if there exists some $\alpha \ge 1$ such that for every set $T \subseteq N$, we have $\alpha \cdot \frac{c(T)}{|T|} \geq c_{\max}$, where $c_{\max} = \max_{j\in T}c(\{j\})$.
\end{definition}

Clearly, any function that is $\alpha$-average max-bounded is also $\alpha$-average min-bounded. 
Thus, we already have an $\alpha H_n$-approximation for non-decreasing, subadditive and $\alpha$-average max-bounded cost functions. Below we show that we can achieve a much better guarantee. 

\begin{theorem}\label{thm:gmax} 
	Suppose we have general valuation functions and for each item $j\in M$, the cost function $c_j:2^N \rightarrow\mathbb{R}_{\geq 0}$ is non-decreasing, subadditive, and $\alpha$-average max-bounded for some $\alpha \geq 1$. Then the Sequential Mechanism satisfies IR, NPT, WGSP, and is budget-balanced and $\alpha$-approximate.
\end{theorem}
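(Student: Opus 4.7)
The plan is to follow the blueprint laid out in Remark \ref{rem:gen}: since IR, NPT, WGSP and budget-balance for the Sequential Mechanism are already known from \cite{DobzinskiO17,Moulin99}, and since the social cost analysis of Theorem \ref{thm:ultra} reduces the whole approximation question to establishing the incremental-cost bound \eqref{eq:icb} with an appropriate $\beta$, it suffices to show that
\[
\sum_{i \in N} \Delta_i(A_{<i}, A^*_i) \;\le\; \alpha \cdot C(A^*),
\]
where $A^*$ denotes an optimal allocation. Plugging this into the chain of inequalities used in the proof of Theorem \ref{thm:ultra} will then directly yield $\pi(A) \le \alpha \cdot \pi(A^*)$.

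My first step would be to reuse the subadditivity argument from the proof of Theorem \ref{thm:ultra} verbatim, to bound each incremental cost by the standalone costs of the items:
\[
\Delta_i(A_{<i}, A^*_i) \;\le\; \sum_{j \in A^*_i} c_j(\{i\}).
\]
Summing over all players and rewriting the double sum in terms of the optimal sets $T^*_j = \{i \in N : j \in A^*_i\}$ gives
\[
\sum_{i \in N} \Delta_i(A_{<i}, A^*_i) \;\le\; \sum_{j \in M} \sum_{i \in T^*_j} c_j(\{i\}).
\]

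The second and crucial step is to invoke the $\alpha$-average max-bounded property of each $c_j$ in place of Proposition \ref{prop:betterg}. Since by Definition \ref{def:cmax} every singleton cost in $T^*_j$ satisfies $c_j(\{i\}) \le \max_{k \in T^*_j} c_j(\{k\}) \le \alpha \cdot c_j(T^*_j)/|T^*_j|$, summing this trivial bound over the $|T^*_j|$ players in $T^*_j$ gives
\[
\sum_{i \in T^*_j} c_j(\{i\}) \;\le\; |T^*_j| \cdot \alpha \cdot \frac{c_j(T^*_j)}{|T^*_j|} \;=\; \alpha \cdot c_j(T^*_j).
\]
Summing over items $j \in M$ then yields the required bound $\sum_{i} \Delta_i(A_{<i}, A^*_i) \le \alpha \sum_{j \in M} c_j(T^*_j) = \alpha \cdot C(A^*)$.

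There is no real obstacle: the whole improvement from $\alpha H_n$ to $\alpha$ comes from the fact that bounding every singleton cost by $c_{\max}$ (rather than by the $i$th-ranked value, as in the proof of Proposition \ref{prop:betterg}) produces a flat sum of size $|T^*_j| \cdot c_{\max}$ instead of a harmonic sum. The only thing to be careful about is to keep the invocation of Remark \ref{rem:gen} clean, so that one does not have to redo the valuation-cancellation manipulation already carried out in the proof of Theorem \ref{thm:ultra}; once \eqref{eq:icb} is verified with $\beta = \alpha$, the theorem follows immediately.
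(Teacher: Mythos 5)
Your proposal is correct and matches the paper's own proof essentially line for line: both reduce the claim to verifying the incremental-cost bound \eqref{eq:icb} with $\beta=\alpha$ via Remark \ref{rem:gen}, bound each $\Delta_i(A_{<i},A^*_i)$ by $\sum_{j\in A^*_i} c_j(\{i\})$ using subadditivity, and then apply the $\alpha$-average max-bounded property to charge each singleton cost against $\alpha\, c_j(T^*_j)/|T^*_j|$. No issues to report.
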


\begin{proof}
We only need to prove that the mechanism is $\alpha$-approximate. 
	By following exactly the same reasoning as in the proof of Theorem \ref{thm:ultra} and using the observation made in Remark \ref{rem:gen} (note that $c_j$ is subadditive by assumption), we only need to prove that \eqref{eq:icb} holds for $\beta = \alpha$. 
	Exploiting the properties of the cost functions, we obtain
	$$\sum_{i \in N} \Delta_i(A_{<i}, A^*_i) \le \sum_{i\in N} \sum_{j\in A^*_i} c_j(\set{i}) \le \sum_{j\in M} \sum_{i\in T^*_j} \alpha \frac{c_j(T_j^*)}{|T_j^*|} = \alpha \sum_{j\in M} c_j(T_j^*) \le \alpha C(A^*).
	$$

	\vneg\vneg
\end{proof}

\vneg

\myheader{Example applications of combinatorial cost functions}
We give some examples of combinatorial cost functions below and show that they are $\alpha$-average max-bounded (possibly depending on some parameters of the combinatorial problem). In particular, by applying Theorem~\ref{thm:gmax} we obtain attractive social cost approximation guarantees for these problems. 
For simplicity, all examples consider a single item only; but clearly, we can consider more general multiple item settings (e.g., when for each item $j \in M$, $c_j$ captures one of the problems below).
\begin{enumerate}
	\item \textbf{Set Cover.} We are given a universe of elements $U$ and a family $\mathcal{F} \subseteq 2^U$ of subsets of $U$. The players correspond to the elements of $U$ and the cost $c(S)$ for serving a set of players $S \subseteq U$ is defined as the size of a minimum cardinality set cover for $S$.
	\item \textbf{Vertex Cover.} This is a special case of Set Cover. We are given an undirected and unweighted graph $G = (V, E)$ and the players are the edges of the graph. The cost $c(S)$ for serving a set $S \subseteq E$ of players is defined as the size of a minimum vertex cover in the subgraph induced by $S$. 
	\item \textbf{Matching.} We are given an undirected and unweighted graph $G = (V, E)$ and the players correspond to the edges. The cost $c(S)$ for serving a set $S$ of players is defined as the size of a maximum cardinality matching in the subgraph induced by $S$.
\end{enumerate}

Using our $\alpha$-average max-bounded notion, it is now easy to prove that these problems admit constant social cost approximation guarantees (under certain restrictions). 

\begin{theorem}\label{thm:appl}
	The Sequential Mechanism is $\alpha$-approximate for the above problems, where 
	\begin{compactenum}
		\item $\alpha = d$ for the Set Cover problem, where $d$ is the maximum cardinality of the sets in $\mathcal{F}$; 
		\item $\alpha = k$ for the Vertex Cover problem in graphs of maximum degree $k$;
		\item $\alpha = k$ for the Matching problem in bipartite graphs of maximum degree $k$;
		\item $\alpha = (5k+3)/4$ for the Matching problem in general graphs of maximum degree $k$.
	\end{compactenum}
\end{theorem}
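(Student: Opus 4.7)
The plan is to reduce each of the four cases to Theorem~\ref{thm:gmax}: I will verify that for each stated cost function, $c$ is non-decreasing, subadditive, and $\alpha$-average max-bounded for the claimed value of $\alpha$. Non-decreasingness is immediate in all cases, since adding elements, edges, or graph edges can only enlarge a minimum set cover, a minimum vertex cover, or a maximum matching. Subadditivity is also routine: for Set Cover and Vertex Cover, the union of optimal covers of $S$ and $T$ is a feasible cover of $S \cup T$; for Matching, any matching $M$ in the subgraph induced by $S \cup T$ decomposes as $(M \cap S) \cup (M \cap T)$, where each part is a matching in the respective induced subgraph, giving $c(S \cup T) \le c(S) + c(T)$.

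A key observation that simplifies the $\alpha$-average max-bounded verification is that in all four problems $c(\{i\}) = 1$ for every single player $i$ (one set covers a single element, one vertex covers a single edge, one edge forms a matching of size one). Hence $c_{\max} = 1$ for every $T$, and Definition~\ref{def:cmax} reduces to proving a lower bound of the form $c(T) \ge |T|/\alpha$ for every set $T$ of players.

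I would then establish these lower bounds case by case. For Set Cover with maximum set size $d$, any feasible cover of $T$ uses sets of cardinality at most $d$, so at least $|T|/d$ sets are required, yielding $\alpha = d$. For Vertex Cover in a graph of maximum degree $k$, every chosen vertex covers at most $k$ edges of $T$, hence at least $|T|/k$ vertices are needed, giving $\alpha = k$. For Matching in a bipartite graph of maximum degree $k$, K\"onig's theorem equates the maximum matching with the minimum vertex cover in the induced subgraph, so the previous bound transfers and again $\alpha = k$. For Matching in a general graph of maximum degree $k$, I need the inequality $\nu(G') \ge 4|T|/(5k+3)$, where $G'$ is the subgraph induced by $T$ and $\nu$ denotes the matching number; this is the main (and really only) nontrivial ingredient in the proof and can be invoked as a known combinatorial lower bound (see e.g.\ Biedl, Demaine, Duncan, Fleischer, and Kobourov). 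Once these four lower bounds are collected, Theorem~\ref{thm:gmax} delivers the stated approximation guarantees immediately, and the rest of the argument is bookkeeping; the main obstacle to anticipate is simply handling the general-graph matching case, which unlike the bipartite version cannot be dispatched via K\"onig's theorem and requires the sharper $(5k+3)/4$ bound.
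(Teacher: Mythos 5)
Your proposal is correct and follows essentially the same route as the paper's proof: both verify the hypotheses of Theorem~\ref{thm:gmax}, use $c(\{i\})=1$ to reduce $\alpha$-average max-boundedness to the lower bound $c(T)\ge |T|/\alpha$, invoke K\"onig's theorem for bipartite matching, and cite a known $4m/(5k+3)$ matching lower bound for the general-graph case (the paper attributes it to Han rather than Biedl et al., a purely bibliographic difference). Your explicit verification of subadditivity for the matching cost function is slightly more detailed than the paper's, but the argument is the same.
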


\begin{proof} We have the following:
\begin{enumerate}
		\item For Set Cover, it is trivial to check that the cost function is subadditive. Consider a subset $T$ of the players. Note that for a single player $i\in T$, $c(\{i\}) = 1$, since each element can be covered by a single subset. Given that the maximum cardinality of a subset is $d$, the minimum set cover for covering the set of players $T$ is at least $|T|/d$. We conclude that $c$ is $d$-average max-bounded. 
		\item Vertex Cover is a special case of Set Cover. Consider a subset $T\subseteq E$ of the edges. Given that the maximum degree is $k$, the minimum vertex cover for covering a set of edges $T$ is at least $|T|/k$. We conclude that this cost function is $k$-average max-bounded. 
		\item The cost function in the Maximum Matching problem is also subadditive. In bipartite graphs, we also know that the cardinality of a maximum matching equals the cardinality of a minimum vertex cover. This immediately implies a $k$-approximation.
		\item For general graphs, note again that $c(\{i\}) = 1$ for a single player $i\in E$. By the work of \cite{Han08}, we know that in a graph of $m$ edges and with degree $k$, there always exists a matching of size at least $4m/(5k+3)$. Hence for a set of players $T\subseteq E$, we have that $c(T)/|T| \geq 4/(5k+3)$. We conclude that the cost function $c$ is $(5k+3)/4$-average max-bounded. 
	\end{enumerate}
\end{proof}

We compare these bounds with the existing results in the literature: 
For Vertex Cover, there is a mechanism that is $2$-budget-balanced and $O(\log{n})$-approximate \cite{MehtaRS09}. Thus, for graphs with maximum degree less than $\log{n}$, we obtain a better guarantee. 
For Set Cover, there is a mechanisms that is $O(\log{n})$-budget-balanced and $O(\log{n})$-approximate \cite{MehtaRS09}. Hence, we obtain an improvement if the sets in $\mathcal{F}$ have small size. 
Finally, we note that our results do not apply to the weighted versions of these problems.

\subsection{Guarantees of the Sequential Mechanism for Non-Separable Cost Functions}\label{subsec:nonsep}

We extend our results to non-separable cost functions. 
Recall that in this setting, the cost $C(A)$ of an allocation $A = (A_i)_{i \in N}$ is given by some general (not necessarily separable) cost function $C:(2^M)^n \rightarrow \mathbb{R}_{\geq 0}$. In particular, $C$ may encode dependencies among different items. 

We introduce some more notation. Given two allocations $S = (S_i)_{i \in N}$ and $T = (T_i)_{i \in N}$, we define $S \cup T$ as the componentwise union of $S$ and $T$, i.e., $S \cup T = (S_1 \cup T_1, \dots, S_n \cup T_n)$. Similarly, we write $S \subseteq T$ if this relation holds componentwise, i.e., $S_i \subseteq T_i$ for every $i \in N$. 
Given an allocation $A = (A_i)_{i \in N}$ and a set of players $S \subseteq N$, we define $A|_{S} = (A_S, \emptyset_{-S})$ as the allocation in which each player $i \in S$ receives the items in $A_i$ and all other players receive nothing.
If $S = \set{i}$ is a singleton set, we also write $A|_i$ instead of $A|_{\set{i}}$.
Throughout this section, we remain in the domain of non-decreasing and subadditive cost functions. In the non-separable case, a cost function $C:(2^M)^n \rightarrow \mathbb{R}_{\geq 0}$ is non-decreasing if $C(S) \le C(T)$ for every pair of allocations $S, T$, with $S \subseteq T$. Also, it is subadditive if for every two allocations $S = (S_i)_{i \in N}$ and $T = (T_i)_{i \in N}$, we have $C(S \cup T) \leq C(S)+C(T)$.


We now adapt Definitions \ref{def:cmin} and \ref{def:cmax} to non-separable cost functions.

\begin{definition}
	Let $C:(2^M)^n \rightarrow \mathbb{R}_{\geq 0}$ be a non-separable cost function.
	\begin{compactitem}\itemsep0pt
		\item $C$ is \emph{$\alpha$-average min-bounded}, if there exists some $\alpha\geq 1$ such that for every allocation $A$ and every subset $T \subseteq N$ with $|T| \ge 2$, it holds $\alpha \frac{C(A|_T)}{|T|} \geq C_{\min}$, where $C_{\min} = \min_{j\in T} C(A|_j)$. 
		
		\item $C$ is \emph{$\alpha$-average max-bounded}, if there exists some $\alpha\geq 1$ such that for every allocation $A$ and every subset $T \subseteq N$ with $|T| \ge 2$, it holds $\alpha \frac{C(A|_T)}{|T|} \geq C_{\max}$, where $C_{\max} = \max_{j\in T} C(A|_j)$.
	\end{compactitem}
\end{definition}


As before, if a non-separable function is $\alpha$-average max-bounded, then it is also $\alpha$-average min-bounded. 

We remark that it has been shown in \cite{Moulin99,DobzinskiO17} that the Sequential Mechanism is weakly group-strategyproof and budget balanced for the non-separable setting. We proceed by adapting Proposition~\ref{prop:betterg} for the non-separable setting.

\begin{proposition}\label{prop:genbetterg}
	If $C:(2^M)^n \rightarrow \mathbb{R}_{\geq 0}$ is a non-separable cost function, which is non-decreasing and $\alpha$-average min-bounded, then for every allocation $A$, $\sum_{i \in T} C(A|_i) \leq H_{|T|} \cdot C(A|_T)$ for every $T \subseteq N$.
\end{proposition}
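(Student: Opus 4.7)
The plan is to follow the direct non-separable analog of Proposition~\ref{prop:betterg}, substituting the separable singleton cost $c(\{i\})$ with the non-separable singleton cost $C(A|_i)$ and the aggregate $c(T)$ with $C(A|_T)$. Fix the allocation $A$ and subset $T \subseteq N$; set $k = |T|$ and relabel the elements of $T$ as $1, \ldots, k$ so that $C(A|_1) \ge C(A|_2) \ge \cdots \ge C(A|_k)$ (nonincreasing singleton costs).

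For each $i \in \{2, \ldots, k\}$ consider the prefix $T_i = \{1, \ldots, i\}$; by the ordering, player $i$ attains the minimum singleton cost in $T_i$. Invoking the non-separable $\alpha$-average min-bounded property with allocation $A$ and subset $T_i$ gives $C(A|_i) \le \alpha \cdot C(A|_{T_i})/i$. Since $T_i \subseteq T$ componentwise, $A|_{T_i} \subseteq A|_T$, and monotonicity of $C$ gives $C(A|_{T_i}) \le C(A|_T)$; combining, $C(A|_i) \le \alpha \cdot C(A|_T)/i$. The boundary $i=1$ is immediate because $C(A|_1) = C(A|_{T_1}) \le C(A|_T)$. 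Summing for $i = 1, \ldots, k$ and using $\sum_{i=1}^{k} 1/i = H_k$ closes the argument.

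The main point I want to flag is the constant. The natural execution above produces $\alpha H_{|T|} \cdot C(A|_T)$, exactly mirroring the separable Proposition~\ref{prop:betterg}; the stated form $H_{|T|} \cdot C(A|_T)$ without the $\alpha$ is not achievable for $\alpha > 1$, as a tiny two-player example shows (take $C(A|_1) = C(A|_2) = C(A|_T) = 1$, which is non-decreasing, subadditive, and $2$-average min-bounded, yet $\sum_i C(A|_i) = 2 > 3/2 = H_2 \cdot C(A|_T)$). I would therefore treat the omission of the $\alpha$ in the statement as a typographical slip inherited from the separable analog and present both the statement and proof with the matching $\alpha H_{|T|}$ factor, which is also what the intended downstream application (the non-separable analog of Theorem~\ref{thm:ultra}) would require. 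Modulo this constant, no other step of the proof presents a real obstacle: the only non-routine ingredients are the componentwise reading of $A|_S \subseteq A|_T$ and the application of the non-separable min-bounded definition to the prefixes $T_i$.
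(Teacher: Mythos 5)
Your proof takes exactly the same route as the paper's: order the players of $T$ by nonincreasing singleton cost $C(A|_i)$, apply the $\alpha$-average min-bounded property to each prefix $[i]$ to get $C(A|_i) \le \alpha\, C(A|_{[i]})/i$, then use monotonicity ($A|_{[i]} \subseteq A|_T$ componentwise) and sum. You are also right to flag the constant: the paper's statement omits the $\alpha$ that is present in the separable Proposition~\ref{prop:betterg}, and its own proof silently drops it in the chain $C(A|_i) \leq \alpha\, C(A|_{[i]})/i \leq C(A|_T)/i$, where the last expression should read $\alpha\, C(A|_T)/i$. Your two-player example (a single item with $C$ equal to $1$ on every nonempty allocation) is a legitimate non-decreasing, subadditive, $2$-average min-bounded function violating the bound as literally stated, so the correct conclusion is $\sum_{i \in T} C(A|_i) \leq \alpha H_{|T|} \cdot C(A|_T)$; this is also what the downstream Corollary~\ref{col:seqgen}, which claims an $\alpha H_n$ social cost approximation, actually uses. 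No gap in your argument beyond this (correct) emendation.
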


\begin{proof}
	Let $A$ be an allocation and fix $T\subseteq N$. 
	Rename the players in $T$ such that for any $i, j \in T$ with $i < j$, it holds that $C(A|_i) \ge C(A|_j)$. For convenience, we may assume that $T = \{1, 2, \dots, |T|\}$. Fix some $i \in T$ and consider the set $\set{1, 2, \dots, i} = [i]\subseteq T$. By exploiting that $C$ is $\alpha$-average min-bounded and non-decreasing, we have
	$$
	C(A|_i) \leq \alpha \frac{C(A|_{[i]})}{i} \leq \frac{C(A|_T)}{i}.	$$ 
	
	Note that the above inequality holds for any $i \in T$. 
	Summing over all players $i\in T$ proves the claim. 	
\end{proof}

Now using the same reasoning as in the proof of Theorem \ref{thm:ultra}, we obtain the same approximation guarantee of $\alpha H_n$ as in the separable setting. Further, the improvement we obtained in Theorem \ref{thm:gmax} also goes through in this setting. We summarize these observations in the following corollary. 

\begin{corollary}\label{col:seqgen} 
	Suppose we have general valuation functions and a non-decreasing, subadditive, and $\alpha$-average min-bounded cost function $C:(2^M)^n \rightarrow\mathbb{R}_{\geq 0}$. Then the Sequential Mechanism satisfies IR, NPT, WGSP, and is budget balanced and $\alpha \cdot H_n$-approximate. Furthermore, if $C$ is also $\alpha$-average max-bounded, then the Sequential Mechanism is $\alpha$-approximate.
\end{corollary}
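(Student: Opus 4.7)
The plan is to adapt the proofs of Theorems~\ref{thm:ultra} and~\ref{thm:gmax} to the non-separable setting. Since WGSP and budget balance of the Sequential Mechanism for non-separable $C$ have already been established in \cite{Moulin99,DobzinskiO17}, and IR, NPT follow directly from the mechanism's definition, the task reduces to proving the two social cost approximation bounds.

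First I would reuse the telescoping/optimality decomposition from the proof of Theorem~\ref{thm:ultra}. Let $A$ be the allocation produced by SM and $A^*$ any optimal one, and define $\Delta_i(A_{<i}, S) = C(A_1, \dots, A_{i-1}, S, \emptyset, \dots, \emptyset) - C(A_{<i})$ in the non-separable setting in exactly the same way as in the algorithm. Because $A_i$ was chosen in SM to maximize $v_i(\cdot) - \Delta_i(A_{<i}, \cdot)$, the identical manipulation as in Theorem~\ref{thm:ultra} yields
\[
\pi(A) \;\le\; \sum_{i \in N}\bigl[v_i(M) - v_i(A^*_i)\bigr] + \sum_{i \in N}\Delta_i(A_{<i}, A^*_i),
\]
so that it suffices to show $\sum_i \Delta_i(A_{<i}, A^*_i) \le \beta \cdot C(A^*)$ for $\beta = \alpha H_n$ under the min-bounded hypothesis (resp.\ $\beta = \alpha$ under the max-bounded hypothesis); the final bound $\pi(A) \le \beta \cdot \pi(A^*)$ then follows because $\beta \ge 1$.

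The key step is a single allocation-level use of subadditivity. Observing that the tuple $(A_1, \dots, A_{i-1}, A^*_i, \emptyset, \dots, \emptyset)$ is precisely the componentwise union $A_{<i} \cup A^*|_i$, non-separable subadditivity of $C$ gives $C(A_{<i} \cup A^*|_i) \le C(A_{<i}) + C(A^*|_i)$, and hence $\Delta_i(A_{<i}, A^*_i) \le C(A^*|_i)$. Summing over $i \in N$ yields $\sum_i \Delta_i(A_{<i}, A^*_i) \le \sum_{i \in N} C(A^*|_i)$. For the $\alpha H_n$ bound I would invoke Proposition~\ref{prop:genbetterg} applied to the allocation $A^*$ with $T = N$, giving $\sum_{i \in N} C(A^*|_i) \le \alpha H_n \cdot C(A^*|_N) = \alpha H_n \cdot C(A^*)$. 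For the $\alpha$ bound under the stronger $\alpha$-average max-bounded assumption, the definition directly yields $C(A^*|_i) \le \max_{k \in N} C(A^*|_k) \le \alpha \cdot C(A^*)/n$ for every $i$, and summing over $i \in N$ gives $\sum_{i \in N} C(A^*|_i) \le \alpha \cdot C(A^*)$.

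The only real obstacle is translating the separable item-by-item bound $\Delta_i(A_{<i}, A^*_i) \le \sum_{j\in A^*_i} c_j(\{i\})$ used in the proofs of Theorems~\ref{thm:ultra} and~\ref{thm:gmax} into the single allocation-level subadditivity inequality above; once this bridge is in place, the rest of the argument is a mechanical port of those proofs, with Proposition~\ref{prop:genbetterg} playing the role that Proposition~\ref{prop:betterg} played in the separable case.
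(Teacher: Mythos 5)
Your proof is correct and follows essentially the same route as the paper, which likewise reduces the corollary to the argument of Theorem~\ref{thm:ultra} via allocation-level subadditivity (giving $\Delta_i(A_{<i}, A^*_i) \le C(A^*|_i)$) together with Proposition~\ref{prop:genbetterg}, and then notes that the Theorem~\ref{thm:gmax} improvement ``goes through.'' The only remark worth making is that you correctly carry the factor $\alpha$ through Proposition~\ref{prop:genbetterg} (whose stated bound $\sum_{i\in T} C(A|_i) \le H_{|T|}\cdot C(A|_T)$ drops the $\alpha$ that its own proof produces, evidently a typo), and your direct application of the max-bounded definition with $T=N$ is exactly the intended port of Theorem~\ref{thm:gmax}.
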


\section{Discussion}\label{sec:dis}

In Section \ref{sec:mech}, we proposed the mechanism IACSM, which is weakly group-strategyproof under general cost functions and symmetric submodular valuations. Moreover it is $\alpha$-budget balanced and $2\alpha^3 H_n$-approximate when we restrict the cost functions to the {$\alpha$-average-decreasing} class. The social cost approximation guarantee further improves to $H_n$ if the cost functions are symmetric submodular and this is best possible (due to the known lower bound for public-excludable goods \cite{DobzinskiMRS18}). It would be very interesting to explore mechanisms that go beyond symmetric submodular valuation functions. It seems that entirely new ideas are needed for this setting. It would also be interesting to extend our mechanism to non-separable cost functions. We note that separability of the costs in Section \ref{sec:mech} is needed for IACSM only to argue that the cost share per item increases as players withdraw (with respect to the trace). One would need to investigate how to adapt the mechanism and enforce this property in the non-separable setting. Technically, this seems far from obvious and we leave a proper treatment of this issue for future work.

In Section \ref{sec:FPM}, we studied the (partially) complementary class of {$\alpha$-average min bounded} cost functions. We showed that the well-known {Sequential Mechanism} is budget balanced and $\alpha H_n$-approximate even for general valuation functions. These results also extend to non-separable cost functions. A very natural question is whether SM is optimal in this setting and we note that the answer is not yet clear: The impossibility result of \cite{DobzinskiMRS18} holds for the public-excludable good cost function which is symmetric submodular and thus {1-average-decreasing}. However, it is not hard to see that this does not fall within the {$\alpha$-average min bounded} class for any constant $\alpha$. This leads to the question of whether there exists a WGSP mechanism that breaks the $\Omega(\log(n))$-approximation in terms of social cost for {$\alpha$-average-min bounded} functions with small values of $\alpha$.

Finally, what we also find very interesting is to identify the class of cost functions for which neither of the two mechanisms studied here perform well. Recall that, for any constant value of $\alpha$, if a cost function is either $\alpha$-average decreasing or $\alpha$-average min-bounded, then a good performance is guaranteed. Thus, we need look at the complement of the set of $\alpha$-average decreasing functions and the set of $\alpha$-average min-bounded functions for small value of $\alpha$ and examine whether these complements have a non-empty intersection. The following proposition shows that this intersection is indeed non-empty.

\begin{proposition}\label{prop:intersec}
	Given $\alpha\geq 1$, the intersection of the complements of $\alpha$-average-decreasing and $\alpha$-average min-bounded functions is non-empty.
\end{proposition}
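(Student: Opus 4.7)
I plan to exhibit, for each $\alpha \geq 1$, an explicit cost function $c:2^N \to \mathbb{R}_{\geq 0}$ that is non-decreasing yet lies in the complement of both classes simultaneously. The construction will be a piecewise-constant then linear ``plateau'' function: pick an integer $k > \alpha$ (say $k = \lceil \alpha \rceil + 1$) and a ground set $N=\{1,\dots,n\}$ with $n \geq k+1$, then set $c(\emptyset)=0$, $c(T)=1$ whenever $1 \leq |T| \leq k$, and $c(T)=|T|$ whenever $|T| \geq k+1$. The idea is that the average cost first drops sharply (from $1$ on singletons down to $1/k$ on sets of size $k$), which will kill the min-bounded property, and then jumps back up (to $1$ on sets of size $k+1$), which will kill the average-decreasing property.

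The plan then has three short verification steps. First, I verify that $c$ is non-decreasing by a simple case analysis on the cardinalities of $S \subseteq T$: either both are in the plateau region $[1,k]$, or both are in the linear region $[k+1,n]$, or the jump from $|T| \le k$ to $|T| \ge k+1$ takes $c$ from $1$ to $|T| \geq k+1 \geq 2$. Second, for failure of $\alpha$-average min-boundedness, I pick any $T$ with $|T|=k$: then $\alpha \cdot c(T)/|T| = \alpha/k < 1 = \min_{i \in T} c(\{i\})$ precisely because $k > \alpha$. Third, for failure of $\alpha$-average decreasingness, I take $S \subseteq T$ with $|S|=k$ and $|T|=k+1$: then $c(S)/|S| = 1/k$ while $c(T)/|T| = 1$, so the ratio of the two averages is exactly $k > \alpha$, violating Definition~\ref{def:avg-dec}.

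There is essentially no obstacle: the construction is engineered so that the two conditions to be falsified are witnessed by the same value of $k$. The only mild subtlety is choosing $k$ strictly greater than $\alpha$ (rather than $\geq \alpha$), which is what simultaneously yields strict inequalities in both Steps 2 and 3. I do not need $n$ to be large relative to $\alpha$, only $n \geq k+1$ so that the linear regime of $c$ is nonempty, hence the example lives on a ground set of size $O(\alpha)$.
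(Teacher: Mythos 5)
Your construction is correct for the literal statement: the plateau-then-linear function is non-decreasing, the set of size $k$ witnesses failure of $\alpha$-average min-boundedness ($\alpha/k<1=c_{\min}$), and the pair $|S|=k\subseteq |T|=k+1$ witnesses failure of $\alpha$-average decreasingness ($\alpha/k<1$), all because $k>\alpha$. The paper instead uses a single ``max of square roots'' function, $c(\{p_j\})=\sqrt{j}$ and $c(S)=\max_{j\in S}\sqrt{j}$, and obtains both violations by letting $n$ grow relative to $\alpha$; your example is more economical in that it lives on a ground set of size $O(\alpha)$ and makes the two violations share the same witness $k$.

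There is, however, one substantive difference you should address: your function is \emph{not} subadditive, whereas the paper's is (and the paper explicitly verifies this). For instance, partition a set of size $k+1$ into $S$ and $T$ with $|S|=k$ and $|T|=1$; then $c(S\cup T)=k+1>2=c(S)+c(T)$ once $k\ge 2$. The proposition as stated does not demand subadditivity, so your argument does prove the literal claim. But the point of the proposition in Section~\ref{sec:dis} is to exhibit a cost function satisfying the paper's standing assumptions on which \emph{neither} mechanism is known to perform well; since the Sequential Mechanism's guarantees (Theorems~\ref{thm:ultra} and~\ref{thm:gmax}) only apply to subadditive cost functions, a non-subadditive example does not really demonstrate that the min-bounded parameterization fails on an otherwise admissible instance. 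To match the force of the paper's example you would need to either verify subadditivity (which fails here) or modify the construction, e.g., cap the linear regime or use a max-based aggregation as the paper does.
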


\begin{proof}
We begin by defining the complements of the two sets:
	\begin{itemize}
		\item \emph{Complement of $\alpha$-average-decreasing}: For every constant $\alpha \geq 1$ that we choose, there exist sets $S\subseteq T \subseteq N$ such that: $\alpha \frac{C(S)}{|S|}<  \frac{C(T)}{|T|}$.
		\item \emph{Complement of $\alpha$-average min-bounded}: For every constant $\alpha \geq 1$ that we choose, there exists a set $T \subseteq N$ such that: $\alpha \frac{C(T)}{|T|}<  C(\{i\})$, where $i = \argmin_{j\in T}C(\{j\})$.
	\end{itemize}

	We will prove this for the case where there exists only a single item. Consider set $N=\{p_1, p_2, ..., p_n\}$ of players and the following function $c:2^n \rightarrow\mathbb{R}_{\geq 0}$:
	\begin{align}
	c(S) = \begin{cases}
	0 & \text{if $S=\emptyset$} \\
	\sqrt{j} & \text{if $S=\{p_j\}$} \\
	\max_{j \in S}c(\{p_j\}) & \text{if $|S|\geq 2$} 
	\end{cases} 
	\end{align}
	where $S \subseteq N$. Before we proceed with the proof, notice that this function is also subadditive and by definition non-decreasing.
	\begin{itemize}
		\item \emph{Complement of $\alpha$-average-decreasing}: Set $S=\{p_2\}$ and $T=\{p_2, p_n\}$. Now notice that $\alpha< \frac{\sqrt{n}}{2} \Rightarrow 2\alpha< \sqrt{n}$, which holds since for any $\alpha \geq 1$ that we choose, we can always find a large enough $n$ for the inequality to be true.
		\item \emph{Complement of $\alpha$-average min-bounded}: $S=\{p_2\}$ and $T=N$. Now notice that $\alpha \frac{\sqrt{n}}{n}<1\Rightarrow \sqrt{n}<\frac{n}{\alpha}$, which holds since for any $\alpha \geq 1$ that we choose, we can always find a large enough $n$ for the inequality to be true.
		
	\end{itemize}
Therefore, the intersection of the complements is non-empty and the proof is complete.
\end{proof}

Notice that the proof of this proposition follows by constructing a cost function that requires non-constant values of $\alpha$ to be captured by either of our parameterizations. Although the intersection turns out to be non-empty, the constructed cost function is rather artificial and more natural examples are elusive so far. In fact, for most of the known cost functions that have been studied in the literature, at least one of our mechanisms achieves an $O(H_n)$-approximation. To make further progress, we believe it is important to understand better the class of functions defined by the intersection of the two complements, as it would help us to identify the missing elements for deriving mechanisms for a wider class of cost functions.

\section{Acknowledgements}

Part of this work was done while Georgios Birmpas was an intern of the Networks and Optimization group at Centrum Wiskunde \& Informatica. Georgios Birmpas was also partially supported by the ERC Advanced Grant 321171 (ALGAME).

\newpage

\bibliographystyle{plainurl}
\bibliography{CSoCFF}

\end{document}